\documentclass[a4paper,11pt,]{article}
\usepackage{graphicx}
\usepackage{tabularx}
 \usepackage{amssymb}
 \usepackage{amsmath}
\usepackage{multirow}
\usepackage{supertabular}

\setlength{\oddsidemargin}{0 mm}
\setlength{\evensidemargin}{0 mm}
\setlength{\topmargin}{-10 mm}
\setlength{\textheight}{230 mm}
\setlength{\textwidth}{165 mm}
\setlength{\parskip}{1 mm}

\newtheorem{THM}{Theorem}[section]
\newtheorem{LMA}[THM]{Lemma}


\global \count10 = 0

\global \count11 = 1

\global \count12 = 1

\global \count13 = 1

\newcommand{\com}[1]{\ifnum\count13<1 #1 \fi}

\def\squarebox#1{\hbox to #1{\hfill\vbox to #1{\vfill}}}
\def\qed{\hspace*{\fill}%
        \vbox{\hrule\hbox{\vrule\squarebox{.667em}\vrule}\hrule}\smallskip}
\newenvironment{proof}{\begin{trivlist}
\item[\hspace{\labelsep}{\em\noindent Proof.~}]}{\qed\end{trivlist}}

\def\squarebox#1{\hbox to #1{\hfill\vbox to #1{\vfill}}}
\def\qed{\hspace*{\fill}%
        \vbox{\hrule\hbox{\vrule\squarebox{.667em}\vrule}\hrule}\smallskip}

\begin{document}

\title{
An optimal algorithm for 2-bounded delay buffer management with lookahead
}

\author{
Koji M. Kobayashi
}

\date{}

\maketitle

\pagestyle{plain}
\thispagestyle{plain}

\begin{abstract}
	\ifnum \count10 > 0
	%
	Kesselman
	QoS
	%
	%
	%
	$d(p) - r(p) + 1 \leq s$
	$s$-bounded instance
	$s \geq 2$
	%
	%
	%
	%
	B{\"o}hm
	lookahead
	%
	%
	
	%
	%
	
	%
	\fi
	\ifnum \count11 > 0
	%
	%
	The bounded delay buffer management problem, 
	which was proposed by Kesselman et~al.\ (STOC 2001 and SIAM Journal on Computing 33(3), 2004), 
	is an online problem focusing on buffer management of a switch supporting Quality of Service (QoS). 
	The problem definition is as follows: 
	Packets arrive to a buffer over time 
	and each packet is specified by the {\em release time}, {\em deadline} and {\em value}. 
	An algorithm can transmit at most one packet from the buffer at each integer time 
	and can gain its value as the {\em profit} 
	if transmitting a packet by its deadline after its release time. 
	The objective of this problem is to maximize the gained profit. 
	We say that an instance of the problem is $s$-bounded 
	if for any packet, 
	an algorithm has at most $s$ chances to transmit it. 
	For any $s \geq 2$, 
	Hajek (CISS 2001) showed that the competitive ratio of any deterministic algorithm is at least $(1 + \sqrt{5})/2 \approx 1.619$. 
	It is conjectured that there exists an algorithm whose competitive ratio matching this lower bound for any $s$. 
	However, 
	it has not been shown yet. 
	Then, 
	when $s = 2$, 
	B{\"o}hm et al.~(ISAAC 2016) introduced the {\em lookahead} ability to an online algorithm, that is the algorithm can gain information about future arriving packets, 
	and showed that the algorithm achieves the competitive ratio of $(-1 + \sqrt{13})/2 \approx 1.303$. 
	Also, 
	they showed that the competitive ratio of any deterministic algorithm is at least $(1 + \sqrt{17})/4 \approx 1.281$. 
	In this paper, for the 2-bounded model with lookahead, 
	we design an algorithm with a matching competitive ratio of $(1 + \sqrt{17})/4$. 
	%
	\fi
\end{abstract}


\section{Introduction} \label{sec:intro}
\ifnum \count10 > 0
%
%
Aiello
Quality of Service (QoS)
%
%
%
%
Kesselman
{\em 
%
%
%
%
%
$A$
%
%
$d(p) - r(p) + 1 \leq s$
{\em $s$-
%
%
$s \geq 2$
%
%
%

%
%
%
%
%
%
%
%
B{\"o}hm
%

%
\fi
\ifnum \count11 > 0
%
%
The online buffer management problem proposed by Aiello et~al.~\cite{WA00} formulates the management of buffers to store arriving packets in a network switch with Quality of Service (QoS) support as an online problem. 
%
%
This problem has received much attention among online problems and has been studied for the last fifteen years, 
which leads to developing various variants of this problem 
(see comprehensive surveys \cite{MG10,SN15}). 
Kesselman et~al.~\cite{AKB01} proposed the {\em bounded delay buffer management} problem 
as one of the variants, 
whose definition is as follows: 
Packets arrive to a buffer over time. 
A packet $p$ is specified by the {\em release time} $r(p)$, {\em value} $v(p)$ and {\em deadline} $d(p)$. 
An algorithm is allowed to transfer at most one packet at each integer time. 
If the algorithm transmits a packet between its release time and deadline, 
it can gain its value as the {\em profit}. 
The objective of this problem is to maximize the gained profit. 
The performance of an online algorithm for this problem is evaluated using {\em competitive analysis}~\cite{AB98,DS85}. 
If for any problem instance, 
the profit of an optimal offline algorithm $OPT$ is at most $c$ times that of an online algorithm $A$, 
then we say that the competitive ratio of $A$ is at most $c$. 
We call a problem instance the {\em $s$-bounded instance} 
(or {\em $s$-bounded delay buffer management} problem) 
in which for any packet $p$, 
$d(p) - r(p) + 1 \leq s$. 
For any $s \geq 2$, 
Hajek~\cite{Haj01} showed that 
the competitive ratio of any deterministic algorithm is at least $(1 + \sqrt{5})/2 \approx 1.619$. 
Also, 
it is conjectured that for any $s \geq 2$, 
there exists a deterministic algorithm with a competitive ratio of $(1 + \sqrt{5})/2$ (see, e.g. \cite{MG10}), which has not been proved yet. 
There is much research among online problems to reduce the competitive ratio of an online algorithm for the original problems by adding extra abilities to the algorithm. 
One of the major methods is called the {\em lookahead} ability, 
with which an online algorithm can obtain information about arriving packets in the near future. 
This ability is introduced to various online problems: 
The bin packing problem~\cite{EG96}, 
the paging problem~\cite{SA97,DB98}, 
the list update problem~\cite{SA98}, 
the scheduling problem~\cite{RM98}
and so on. 
Then, 
B{\"o}hm et~al.~\cite{MB16} introduced the lookahead ability to the bounded delay buffer management problem, 
that is, 
they gave an online algorithm for this problem an ability to obtain the information about future arriving packets 
and analyzed its performance. 
\fi
\ifnum \count10 > 0
%
%
\noindent
\textbf{Previous Results and Our Results.}~~~
B{\"o}hm
lookahead
%
%
%

%
%

%
\fi
\ifnum \count11 > 0
%
%
\noindent
\textbf{Previous Results and Our Results.}~~~
B{\"o}hm et~al.~\cite{MB16} studied the 2-bounded bounded delay buffer management problem with lookahead. 
They designed a deterministic algorithm whose competitive ratio is at most $( - 1 + \sqrt{13})/2 \approx 1.303$. 
Also, 
they proved that the competitive ratio of any deterministic algorithm is at least $( 1 + \sqrt{17})/4 \approx 1.281$. 
In this paper, 
we showed an optimal online algorithm for this problem, that is, 
its competitive ratio is exactly $(1 + \sqrt{17})/4$. 
\fi
\ifnum \count10 > 0
%
%
\noindent
\textbf{
lookahed
$s \geq 2$
Hajek~\cite{Haj01}
%
%
$s = \infty$
Englert
%
%
$s = 2$~\cite{AKB01}, 3~\cite{BCC04,CCF06}, 4~\cite{MB16}
%
$s \geq 5$
%
%
{\em memoryless}
%
%
{\em agreeable
$r(p) < r(p')$
$d(p) \leq d(p')$
%
%
Hajek~\cite{Haj01}
Li
%
%
{\em $s$-uniform delay}
%
%
$(1 + \sqrt{5})/2$~\cite{LSS05,JLS12}%
$1.377$~\cite{CJST07}
%

%
%
%
$s$
oblivious adversary
$e/(e-1) \approx 1.582$~\cite{BCC04,CCF06}
%
adaptive adversary
$4/3 \approx 1.333$~\cite{BCJ08}
%
%
%

%
{\em 
%
%
%
%

%
lookahead
%
%

%
\fi
\ifnum \count11 > 0
%
%
\noindent
\textbf{Related Results.}~~~
As mentioned above, 
for the $s$-bounded delay model {\em without} lookahead, 
Hajek~\cite{Haj01} showed that the competitive ratio of any deterministic algorithm is at least $(1 + \sqrt{5})/2 \approx 1.619$ in the case of $s \geq 2$. 
Independently, 
this bound was also shown in \cite{CF03,NACQ03,AZ04}. 
For $s = \infty$, 
Englert and Westermann~\cite{EW07} developed a deterministic online algorithm whose competitive ratio is at most $2 \sqrt{2} - 1 \approx 1.829$, 
which is the current best upper bound. 
For each $s = 2$~\cite{AKB01}, 3~\cite{BCC04,CCF06}, and 4~\cite{MB16}, 
an algorithm with a competitive ratio of $(1 + \sqrt{5})/2$ was designed. 
For any $s \geq 5$, 
an algorithm with a competitive ratio of larger than $(1 + \sqrt{5})/2$ but less than 2 was shown \cite{BCC04,CCF06}. 
%
%
Moreover, 
in the case where an algorithm must decide which packet to transmit on the basis of the current buffer situation, 
called the {\ memoryless} case, 
some results were shown \cite{BCC04,CCF06,EW07}. 
%
%
The {\em agreeable deadline} variant has also been studied. 
In this variant, 
the larger the release times of packets are, 
the larger their deadlines are. 
%
Specifically, 
for any packets $p$ and $p'$, 
$d(p) \leq d(p')$ 
if $r(p) < r(p')$. 
The lower bound of $(1 + \sqrt{5})/2$ by Hajek~\cite{Haj01} is applicable to this variant. 
Li et~al.~\cite{LSS05,JLS12} displayed an optimal algorithm, 
whose competitive ratio matches the lower bound.  
The case in which for any packet $p$, 
$d(p) - r(p) + 1 = s$ has also been studied, 
called the {\em $s$-uniform delay} variant,
which is a specialized variant of the agreeable deadline variant. 
The current best upper and lower bounds for this variant are 
$(1 + \sqrt{5})/2$~\cite{LSS05,JLS12} and $1.377$~\cite{CJST07}, respectively. 
%

%
The research on randomized algorithms for the bounded delay buffer management problem has also been conducted extensively \cite{CF03,BCC04,CCF06,BCJ08,Jez09a,Jez10,JLS12,Jez13}. 
In the case in which $s$ is general, 
the current best upper and lower bounds are $e/(e-1) \approx 1.582$~\cite{BCC04,CCF06,Jez13} and $5/4 = 1.25$~\cite{CF03}, respectively, against an oblivious adversary were shown. 
Upper and lower bounds of $e/(e-1)$~\cite{BCJ08,Jez13} and 
$4/3 \approx 1.333$~\cite{BCJ08}, respectively, against an adaptive adversary were shown. 
For any fixed $s$, 
lower bounds are the same with the bounds in the case in which $s$ is general while 
upper bounds are $1/(1 - (1 - \frac{1}{s})^s)$~\cite{Jez13} against the both adversaries. 
A generalization of the bounded delay buffer management problem has been studied, 
called the {\em weighted item collection} problem \cite{BCD13a,BCD13b,Jez13}. 
In this problem, 
an online algorithm does not know the deadline of each packet but knows the relative order of the deadlines of packets. 
%
%
Many other variants of the buffer management problem have been studied extensively 
(see e.g. \cite{MG10,SN15}). 
\fi
%

\section{Model Description} \label{subsec:model}
\ifnum \count10 > 0
%
%
lookahead
%
%
%
{\em 
%
%
%
%
%
%
%
%
%
2-bounded
$d(p) - r(p) \leq 1$
%
%
%
%
%
%
%
$d(p) = t$
%

%
%
%
$V_{A}(\sigma)$
$OPT$
%
$V_{OPT}(\sigma) \leq V_{ON}(\sigma) c$
%

%
$OPT$
%
%

%
\fi
\ifnum \count11 > 0
%
%
We formally give the definition of the 2-bounded delay buffer management problem with lookahead, 
which is addressed in this paper. 
An {\em input} of this problem is a sequence of phases. 
Time begins with zero and a phase occurs at an integer time. 
Each phase consists of three subphases. 
The first occurring subphase is the {\em arrival subphase}. 
%
At an arrival subphase, arbitrary many packets can arrive to a buffer. 
The buffer has no capacity limit and hence, all arriving packets can always be accepted to the buffer. 
A packet $p$ is characterized by the {\em release time}, {\em deadline} and {\em value}, denoted by $r(p)$, $d(p)$ and $v(p)$ respectively. 
Arrival times and deadlines are non-negative integers and values are positive reals. 
$d(p) - r(p) \leq 1$ holds 
because we focus on 2-bounded instances. 
The second subphase is the {\em transmission phase}. 
%
At a transmission subphase, 
an algorithm can transmit at most one packet from its buffer if any packet. 
At the transmission subphase at a time $t$, 
the algorithm can obtain the information about packets arriving at time $t+1$ using the lookahead ability. 
%
%
The third subphase is the {\em expiration subphase}. 
%
At an expiration subphase, 
a packet which has reached its deadline is discarded from its buffer. 
That is, 
at the expiration subphase at a time $t$, 
all the packets $p$ in the buffer such that $d(p) = t$ are discarded. 
The {\em profit} of an algorithm is the sum of the values of packets transmitted by the algorithm. 
The objective of this problem is to maximize the gained profit. 
Let $V_{A}(\sigma)$ denote the profit of an algorithm $A$ for an input $\sigma$. 
Let $OPT$ be an optimal offline algorithm. 
We say that the competitive ratio of an online algorithm $ON$ is at most $c$ 
if for any input $\sigma$, 
$V_{OPT}(\sigma) \leq V_{ON}(\sigma) c$. 
For ease of analysis, 
we assume that when $OPT$ does not store any packet in its buffer, 
the input is over. 
It is easy to see that 
this assumption does not affect the performance analysis of an algorithm. 
\fi
%

\section{Matching  Upper Bound} \label{UB}
\ifnum \count10 > 0
%
%

%
\fi
\ifnum \count11 > 0
%
%

%
\fi
%

%
\subsection{Notation and Definitions for Algorithm} \label{sec:alg}
\ifnum \count10 > 0
%
%
%
%
$B(t)$
$t > r(p)$
${CP}$
$OPT^{*}(t, t', t'')$
$OPT^{*}(t, t', t'')$
%
$\sigma$
$t$
%
$OPT^{*}(t, t', t'')$
%
%
$OPT^{*}(t, t', t'')$
%
%
$P(t, t', t'')$
$[t, t'']$
%
%
%
\begin{equation} \label{eq:sec.alg.1}
	P(t, t', t') \subseteq P(t, t'+1, t'+1)
\end{equation}
\begin{equation} \label{eq:sec.alg.2}
	P(t, t', t') \subseteq P(t, t', t'+1)
\end{equation}
\begin{equation} \label{eq:sec.alg.3}
	P(t+1, t', t') \subseteq P(t, t', t')
\end{equation}
%
%
\[
	m_{i}(t) = P(t, t+i, t+i) \backslash P(t, t+i-1, t+i-1)
\]
\[
	q_{i}(t) = P(t, t+i, t+i+1) \backslash P(t, t+i, t+i)
\]
%
\[
	P(t, t-1, t-1) = \varnothing
\]
%
%
%
\[
	m_{i} = m_{i}(t)
\]
\[
	m'_{i} = m_{i}(t-1)
\]
\[
	m''_{i} = m_{i}(t-2)
\]
\[
	q_{i} = q_{i}(t)
\]
\[
	q'_{i} = q_{i}(t-1)
\]
\[
	q''_{i} = q_{i}(t-2)
\]
%
%
\[
	R = \frac{1 + \sqrt{17}}{4}
\]
\[
	\alpha = \frac{ -3 + \sqrt{17} }{ 2 }
\]
%
%
%
$s_{t'} = ${\tt null}
%
%
(Cases~1.2.3.4
%
%

%
\fi
\ifnum \count11 > 0
%
%
We give definitions before defining our algorithm {\sc CompareWithPartialOPT} ($CP$). 
For any integer time $t$, 
$B(t)$ denotes the set of packets in ${CP}$'s buffer immediately before the arrival subphase at time $t$. 
That is, 
each packet $p$ in the set is not transmitted before $t$, 
$t > r(p)$ and $t \leq d(p)$. 
For integer times $t, t'(\geq t)$ and $t''(\geq t')$ and an input $\sigma$, 
let $OPT^{*}(t, t', t'')$ be an offline algorithm 
such that 
if 
the packets in $OPT^{*}(t, t', t'')$'s buffer immediately before the arrival subphase at time $t$ is equal to those in $B(t)$, 
and 
the subinput of $\sigma$ during time $[t, t']$ is given to $OPT^{*}(t, t', t'')$, 
that is, 
packets $p$ such that $r(p) \in [t, t']$ arrive to $OPT^{*}(t, t', t'')$'s buffer during time $[t, t']$, 
then 
$OPT^{*}(t, t', t'')$ is allowed to transmit packets only from time $t$ to $t''$ inclusive, and 
chooses the packets whose total profit is maximized. 
If there exist packets with the same value in $OPT^{*}(t, t', t'')$'s buffer, 
$OPT^{*}(t, t', t'')$ follows a fixed tie breaking rule. 
Also, 
$P(t, t', t'')$ denotes the set of $t'' - t + 1$ packets transmitted by $OPT^{*}(t, t', t'')$ during time $[t, t'']$. 
Note that for any $t$ and $t' (\geq t)$, 
the following relations hold 
because of the optimality of packets transmitted by $OPT^{*}(t, t', t'')$ during time $[t, t'']$: 
\begin{equation} \label{eq:sec.alg.1}
	P(t, t', t') \subseteq P(t, t'+1, t'+1)
\end{equation}
\begin{equation} \label{eq:sec.alg.2}
	P(t, t', t') \subseteq P(t, t', t'+1)
\end{equation}
and 
\begin{equation} \label{eq:sec.alg.3}
	P(t+1, t', t') \subseteq P(t, t', t'). 
\end{equation}
We define for any $t$ and $i$, 
\[
	m_{i}(t) = P(t, t+i, t+i) \backslash P(t, t+i-1, t+i-1)
\]
and 
\[
	q_{i}(t) = P(t, t+i, t+i+1) \backslash P(t, t+i, t+i). 
\]
Also, we define 
\[
	P(t, t-1, t-1) = \varnothing. 
\]
Furthermore, 
we describe each value in the algorithm definition for ease of presentation as follows: 
\[
	m_{i} = m_{i}(t)
\]
\[
	m'_{i} = m_{i}(t-1)
\]
\[
	m''_{i} = m_{i}(t-2)
\]
\[
	q_{i} = q_{i}(t)
\]
\[
	q'_{i} = q_{i}(t-1)
\]
and 
\[
	q''_{i} = q_{i}(t-2). 
\]
In addition, 
\[
	R = \frac{1 + \sqrt{17}}{4}
\]
and 
\[
	\alpha = \frac{ -3 + \sqrt{17} }{ 2 }. 
\]
${CP}$ uses the internal variable $s_{t}$ for holding the name of a packet which ${CP}$ transmits at a time $t$. 
$s_{t'} = ${\tt null} holds at first for any integer $t'$. 
${CP}$ uses the two constants {\tt tmp1} and {\tt tmp2} 
if at time $t$, 
${CP}$ cannot decide which packet to transmit at $t+1$ in Cases~1.2.3.4 and 2.2.2.3. 
On the other hand, 
once the name of a packet is set to $s_{t+1}$ at time $t$, 
${CP}$ certainly transmits the packet at $t+1$. 
\fi
%

%
%
\ifnum \count10 > 0
%
%
%
%
%
$m$
$q$
%
%
$ON$
%
%
$ON$
%
%
%
(1) 
$v(q)$
$ON$
$\beta < 1$
(2) 
$ON$
%
%
(1) 
$t+1$
%
%
%
(2) 
$t+1$
%
%
%
(1)
$P(t, t, t+1) = \{ q, m \}$
(2)
$P(t, t, t) = \{ m \}$
%
%
$ON$
$OPT$
%

%
2-bouded delay with lookahead
%
%
%
%
$ON$
(Cases~1.1, 1.2, 1.1.2)%
$t$
%
$d(q_1) = t$
%
$t+1$
$v(m_0) \geq v(m_1)$
%
%
$v(q_1)$
%
%
lookahead
$t+1$
$\alpha$
(Case~1.2.3.1) 
$ON$
$t+2$
$ON$
$OPT$
%
%
(Case~1.2.3.2) 
$ON$
$t+2$
$OPT$
%

%
$t+1$
$v(m_0) < v(m_1)$
$d(m_1) = t+2$
%
%
(Case~1.2.3.3) 
$ON$
$t+2$
$OPT$
%
%
(Case~1.2.3.4) 
$ON$
$t+1$
$t+1$
%

%
$t+1$
$t+1$
$OPT$
$t+1$
%
%
$t+2$
(Case~2.2.2.3)%
%
%

%

%
\fi
\ifnum \count11 > 0
%
%

%
\fi
%

\subsection{Algorithm} \label{sec:alg}
\ifnum \count10 > 0
%
%
\noindent\vspace{-1mm}\rule{\textwidth}{0.5mm} 
\vspace{-3mm}
{\bf {\sc CompareWithPartialOPT} ($CP$)}\\
\rule{\textwidth}{0.1mm}
%
{\bf 
%
%
$s_{t}$
%
%
%
\hspace*{0mm}
		{\bf\boldmath Case~1 ($s_t = ${\tt null}
\hspace*{2mm}
			{\bf\boldmath Case~1.1 ($d(m_{0}) = t$
				$s_{t} := m_{0}$
\hspace*{2mm}
			{\bf\boldmath Case~1.2 ($d(m_{0}) \ne t$
\hspace*{4mm}
				{\bf\boldmath Case~1.2.1 ($d(m_{1}) = t$
					$s_{t} := m_{1}$
\hspace*{4mm}
				{\bf\boldmath Case~1.2.2 ($d(m_{1}) = t + 1$
					$s_{t} := m_{0}$
\hspace*{4mm}
				{\bf\boldmath Case~1.2.3 ($d(m_{1}) \ne t + 1$
\hspace*{6mm}
				{\bf\boldmath Case~1.2.3.1 ($v(m_{0}) \geq v(m_{1})$
					$s_{t} := q_{1}$
\hspace*{6mm}
				{\bf\boldmath Case~1.2.3.2 ($v(m_{0}) \geq v(m_{1})$
					$s_{t} := m_{0}$
\hspace*{6mm}
				{\bf\boldmath Case~1.2.3.3 ($v(m_{0}) < v(m_{1})$
					$s_{t} := m_{0}$
\hspace*{6mm}
				{\bf\boldmath Case~1.2.3.4 ($v(m_{0}) < v(m_{1})$
					$s_{t} := q_{1}$
%
%
\hspace*{0mm}
		{\bf\boldmath Case~2 ($s_t = ${\tt tmp1}
\hspace*{2mm}
				{\bf\boldmath Case~2.1 ($\frac{ v(m'_{0}) + v(m'_{1}) + v(m'_{2}) }{ v(q'_{1}) + v(m'_{0}) + v(m'_{1})} \leq R$
					$s_{t} := m'_{0}$
\hspace*{2mm}
				{\bf\boldmath Case~2.2 ($\frac{ v(m'_{0}) + v(m'_{1}) + v(m'_{2})}{ v(q'_{1}) + v(m'_{0}) + v(m'_{1})} > R$
\hspace*{4mm}
				{\bf\boldmath Case~2.2.1 ($d(m'_{2}) = t + 1$
					$s_{t} := m'_{1}$
\hspace*{4mm}
				{\bf\boldmath Case~2.2.2 ($d(m'_{2}) \ne t + 1$
\hspace*{6mm}
				{\bf\boldmath Case~2.2.2.1 ($q'_{2} \ne q'_{1}$
					$s_{t} := m'_{1}$
\hspace*{6mm}
				{\bf\boldmath Case~2.2.2.2 ($q'_{2} = q'_{1}$
					$s_{t} := m'_{1}$
\hspace*{6mm}
				{\bf\boldmath Case~2.2.2.3 (
					$s_{t} := m'_{0}$
%
%
\hspace*{0mm}
		{\bf\boldmath Case~3 ($s_t = ${\tt tmp2}
\hspace*{2mm}
				{\bf\boldmath Case~3.1 ($\frac{ v(m''_{0}) + v(m''_{1}) + v(m''_{2}) + v(m''_{3}) }{ v(q''_{1}) + v(m''_{0}) + v(m''_{1}) + v(m''_{2}) } \leq R$
					$s_{t} := m''_{1}$
\hspace*{2mm}
				{\bf\boldmath Case~3.2 ($\frac{ v(m''_{0}) + v(m''_{1}) + v(m''_{2}) + v(m''_{3}) }{ v(q''_{1}) + v(m''_{0}) + v(m''_{1}) + v(m''_{2}) } > R$
\hspace*{4mm}
				{\bf\boldmath Case~3.2.1 ($d(m''_{3}) = t + 1$
					$s_{t} := m''_{2}$
\hspace*{4mm}
				{\bf\boldmath Case~3.2.2 ($d(m''_{3}) \ne t + 1$
					$s_{t} := m''_{2}$
\hspace*{4mm}
				{\bf\boldmath Case~3.2.3 ($d(m''_{3}) \ne t + 1$
					$s_{t} := m''_{2}$
%
\rule{\textwidth}{0.1mm}
\fi
\ifnum \count11 > 0
%
%
\noindent\vspace{-1mm}\rule{\textwidth}{0.5mm} 
\vspace{-3mm}
{\bf {\sc CompareWithPartialOPT} ($CP$)}\\
\rule{\textwidth}{0.1mm}
%
{\bf Initialize:} 
For any integer time $t'$, 
$s_{t'} := ${\tt null}. \\
Consider the transmission subphase at a time $t$. 
If the buffer stores no packets, 
do nothing. 
Otherwise, 
do one of the following three cases and then 
transmit the packet whose name is set to $s_{t}$. \\
%
%
\hspace*{0mm}
		{\bf\boldmath Case~1 ($s_t = ${\tt null}):}\\
\hspace*{2mm}
			{\bf\boldmath Case~1.1 ($d(m_{0}) = t$):} 
				$s_{t} := m_{0}$. \\
\hspace*{2mm}
			{\bf\boldmath Case~1.2 ($d(m_{0}) \ne t$):}\\
\hspace*{4mm}
				{\bf\boldmath Case~1.2.1 ($d(m_{1}) = t$):} 
					$s_{t} := m_{1}$ and $s_{t+1} := m_{0}$. \\
\hspace*{4mm}
				{\bf\boldmath Case~1.2.2 ($d(m_{1}) = t + 1$):}
					$s_{t} := m_{0}$ and $s_{t+1} := m_{1}$. \\
\hspace*{4mm}
				{\bf\boldmath Case~1.2.3 ($d(m_{1}) \ne t + 1$):}\\
\hspace*{6mm}
				{\bf\boldmath Case~1.2.3.1 ($v(m_{0}) \geq v(m_{1})$ and $v(q_{1}) \geq \alpha v(m_{1})$):}
					$s_{t} := q_{1}$ and $s_{t+1} := m_{0}$. \\
\hspace*{6mm}
				{\bf\boldmath Case~1.2.3.2 ($v(m_{0}) \geq v(m_{1})$ and $v(q_{1}) < \alpha v(m_{1})$):}
					$s_{t} := m_{0}$ and $s_{t+1} := m_{1}$. \\
\hspace*{6mm}
				{\bf\boldmath Case~1.2.3.3 ($v(m_{0}) < v(m_{1})$ and $\frac{ v(q_{1}) + v(m_{0}) + v(m_{1})}{v(m_{0}) + v(m_{1})} \leq R$):}
					$s_{t} := m_{0}$ and $s_{t+1} := m_{1}$. \\
\hspace*{6mm}
				{\bf\boldmath Case~1.2.3.4 ($v(m_{0}) < v(m_{1})$ and $\frac{v(q_{1}) + v(m_{0}) + v(m_{1})}{v(m_{0}) + v(m_{1})} > R$):}
					$s_{t} := q_{1}$ and $s_{t+1} :=$ {\tt tmp1}. \\
%
%
\hspace*{0mm}
		{\bf\boldmath Case~2 ($s_t = ${\tt tmp1}):}\\
\hspace*{2mm}
				{\bf\boldmath Case~2.1 ($\frac{ v(m'_{0}) + v(m'_{1}) + v(m'_{2})}{ v(q'_{1}) + v(m'_{0}) + v(m'_{1})} \leq R$):}
					$s_{t} := m'_{0}$ and $s_{t+1} := m'_{1}$. \\
\hspace*{2mm}
				{\bf\boldmath Case~2.2 ($\frac{ v(m'_{0}) + v(m'_{1}) + v(m'_{2})}{ v(q'_{1}) + v(m'_{0}) + v(m'_{1})} > R$):}\\
\hspace*{4mm}
				{\bf\boldmath Case~2.2.1 ($d(m'_{2}) = t + 1$):} 
					$s_{t} := m'_{1}$ and $s_{t+1} := m'_{2}$. \\
\hspace*{4mm}
				{\bf\boldmath Case~2.2.2 ($d(m'_{2}) \ne t + 1$):}\\
\hspace*{6mm}
				{\bf\boldmath Case~2.2.2.1 ($q'_{2} \ne q'_{1}$):}
					$s_{t} := m'_{1}$. \\
\hspace*{6mm}
				{\bf\boldmath Case~2.2.2.2 ($q'_{2} = q'_{1}$ and $\frac{ v(q'_{2}) + v(m'_{0}) + v(m'_{1}) + v(m'_{2})}{ v(q'_{1}) + v(m'_{1}) + v(m'_{2})} \leq R$):}
					$s_{t} := m'_{1}$ and $s_{t+1} := m'_{2}$. \\
\hspace*{6mm}
				{\bf\boldmath Case~2.2.2.3 (Otherwise):}
					$s_{t} := m'_{0}$ and $s_{t+1} :=$ {\tt tmp2}. \\
%
%
\hspace*{0mm}
		{\bf\boldmath Case~3 ($s_t = ${\tt tmp2}):}\\
\hspace*{2mm}
				{\bf\boldmath Case~3.1 ($\frac{ v(m''_{0}) + v(m''_{1}) + v(m''_{2}) + v(m''_{3}) }{ v(q''_{1}) + v(m''_{0}) + v(m''_{1}) + v(m''_{2}) } \leq R$):}
					$s_{t} := m''_{1}$ and $s_{t+1} := m''_{2}$. \\
\hspace*{2mm}
				{\bf\boldmath Case~3.2 ($\frac{ v(m''_{0}) + v(m''_{1}) + v(m''_{2}) + v(m''_{3}) }{ v(q''_{1}) + v(m''_{0}) + v(m''_{1}) + v(m''_{2}) } > R$):}\\
\hspace*{4mm}
				{\bf\boldmath Case~3.2.1 ($d(m''_{3}) = t + 1$):}
					$s_{t} := m''_{2}$ and $s_{t+1} := m''_{3}$. \\
\hspace*{4mm}
				{\bf\boldmath Case~3.2.2 ($d(m''_{3}) \ne t + 1$ and $q''_{3} \ne q''_{1}$):}
					$s_{t} := m''_{2}$. \\
\hspace*{4mm}
				{\bf\boldmath Case~3.2.3 ($d(m''_{3}) \ne t + 1$ and $q''_{3} = q''_{1}$):}
					$s_{t} := m''_{2}$ and $s_{t+1} := m''_{3}$. \\
%
\rule{\textwidth}{0.1mm}
\fi
%

\subsection{Overview of the Analysis} \label{sec:overview}
\ifnum \count10 > 0
%
%
%
%
${CP}$
${CP}$
$k$
$T_{i} = (t_{i}, t'_{i})$
$t_{i} \leq t'_{i}$
$t_{1} = 0$
$t'_{k} = \tau$
$t_{j} = t'_{j-1} + 1$
%
%
$T_{i}$
${CP}$
%
$T_{i} = (t, t')$
%
\begin{itemize}
	\itemsep=-2.0pt
	\setlength{\leftskip}{10pt}
	\item
		$t$
		$t' = t$
	\item
		$t$
		$t' = t+1$
	\item
		$t$
		$t+1$
		$t' = t+2$
	\item
		$t$
		$t+1$
		$t' = t+1$
	\item
		$t$
		$t+1$
		$t+2$
		$t' = t+3$
	\item
		$t$
		$t+1$
		$t+2$
		$t' = t+2$
\end{itemize}
%

%
{\em $2_{t}$-packet}
%
%
$OPT$
$\tau'$
${CP}$
$T_{i}$
$T'_{i}$
$T'_{i}$
$T_{i} = (t, t')$
%
$T'_{i} = (\hat{t}, \hat{t}')$
%
${CP}$
$OPT$
$\hat{t} = t+1$
$\hat{t} = t$
%
%
${CP}$
$OPT$
$\hat{t}' = t'+1$
$\hat{t}' = t'$
%

%
%

%
\fi
\ifnum \count11 > 0
%
%
Consider a given input $\sigma$. 
Let $\tau$ be the time at which ${CP}$ transmits the last packet. 
We partition the time sequence $[0, \tau]$ into $k$ sequences $T_{i} (i = 1, \ldots, k)$ to evaluate the competitive ratio of ${CP}$, 
in which 
$k$ depends on $\sigma$ and 
if $T_{i} = (t_{i}, t'_{i})$, 
then $t_{i} \leq t'_{i}$, 
$t_{1} = 0$, 
$t'_{k} = \tau$ and 
for any $j = 2, \ldots, k$, 
$t_{j} = t'_{j-1} + 1$. 
The size of each $T_{i}$ depends on which case ${CP}$ executes at each time. 
Specifically, 
it is defined as follows: 
Suppose that $T_{i} = (t, t')$ and then 
%
\begin{itemize}
	\itemsep=-2.0pt
	\setlength{\leftskip}{10pt}
	\item
		If Case~1.1 is executed at $t$, 
		then $t' = t$. 
	\item
		If Case~1.2.1, 1.2.2, 1.2.3.1, 1.2.3.2 or 1.2.3.3 is executed at $t$, 
		then $t' = t+1$. 
	\item
		If Case~1.2.3.4 is executed at $t$ 
		and Case~2.1, 2.2.1 or 2.2.2.2 at $t+1$, 
		then $t' = t+2$. 
	\item
		If Cases~1.2.3.4 and 2.2.2.1 are executed at $t$ and $t+1$, respectively, 
		then $t' = t+1$. 
	\item
		If Cases~1.2.3.4 and 2.2.2.3 are executed at $t$ and  $t+1$, respectively, and 
		Case~3.1, 3.2.1 or  3.2.3 is executed at $t+2$, 
		then $t' = t+3$. 
	\item
		If Cases~1.2.3.4 and 2.2.2.3 are executed at $t$ and  $t+1$, respectively, and 
		Case~3.2.2 is executed at $t+2$, 
		then $t' = t+2$. 
\end{itemize}
%
%
For a time $t$, 
a packet whose release time is $t$ and deadline is $t+1$ is called a {\em $2_{t}$-packet}. 
On the other hand, 
we also partition the time sequence $[0, \tau']$ into $k$ time sequences $T'_{i} \hspace{1mm} (i = 1, \ldots, k)$, 
in which 
$\tau'$ is the time at which $OPT$ transmits the last packet. 
We will compare the total value of packets transmitted by ${CP}$ during the time $T_{i}$ with that by $OPT$ during the time $T'_{i}$. 
$T'_{i}$ is defined as follows: 
Suppose that $T_{i} = (t, t')$. 
Then, 
we define $T'_{i} = (\hat{t}, \hat{t}')$, 
in which 
if ${CP}$ transmits a $2_{t-1}$-packet $p$ at time $t-1$ and 
$OPT$ transmits $p$ at time $t$, 
then $\hat{t} = t+1$ and otherwise, 
$\hat{t} = t$. 
Moreover, 
if ${CP}$ transmits a $2_{t'}$-packet $p'$ and 
$OPT$ transmits $p'$ at time $t'+1$, 
then $\hat{t}' = t'+1$. 
Otherwise, 
$\hat{t}' = t'$. 
We give the lemma about $T'_{i}$. 
\fi
%
\begin{LMA}\label{LMA:L0}
	\ifnum \count10 > 0
	%
	%
	%
	
	%
	\fi
	\ifnum \count11 > 0
	%
	%
	A time $t \in [0, \tau']$ is contained in some $T'_{j}$. 
	\fi
\end{LMA}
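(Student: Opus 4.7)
The plan is to establish three things about the sequence $T'_1, \ldots, T'_k$: that $\hat{t}_1 = 0$, that consecutive intervals fit together contiguously, and that $\hat{t}'_k \geq \tau'$. The first is immediate from the definition, since $t_1 = 0$ leaves no time $t_1 - 1 = -1$ at which $CP$ could have transmitted a $2_{t_1-1}$-packet, so the conditional clause in the definition of $\hat{t}$ is vacuously false and $\hat{t}_1 = 0$.

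For contiguity, for each $i \in \{1,\ldots,k-1\}$ I would split on whether $CP$ transmits a $2_{t'_i}$-packet $p$ at time $t'_i$ that $OPT$ also transmits at $t'_i + 1$. In the positive case the definition gives $\hat{t}'_i = t'_i + 1$; since $t_{i+1} = t'_i + 1$, the same packet $p$ satisfies the conditional for the left endpoint of $T'_{i+1}$, so $\hat{t}_{i+1} = t_{i+1} + 1 = \hat{t}'_i + 1$. In the complementary case both $\hat{t}'_i = t'_i$ and $\hat{t}_{i+1} = t_{i+1} = t'_i + 1$, again yielding $\hat{t}_{i+1} = \hat{t}'_i + 1$. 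Either way there is no gap between $T'_i$ and $T'_{i+1}$.

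For the right endpoint I would combine the 2-bounded constraint $d(p) - r(p) \leq 1$ with the fact that $CP$ transmits whenever its buffer is nonempty. This forces that no packets arrive strictly after time $\tau$ (otherwise $CP$ would transmit after $\tau$), and hence $\tau' \leq \tau + 1$. If $\tau' \leq \tau$ the claim is immediate from $\hat{t}'_k \geq t'_k = \tau$. Otherwise $\tau' = \tau + 1$, and the packet $p^*$ that $OPT$ transmits at $\tau + 1$ must satisfy $r(p^*) = \tau$ and $d(p^*) = \tau + 1$, i.e.\ it is a $2_\tau$-packet. Because $CP$'s buffer is empty right after the expiration subphase at $\tau$ (by maximality of $\tau$) while $p^*$ does not expire there, $CP$ must have transmitted $p^*$ itself at time $\tau$; this triggers the positive branch in the definition of $\hat{t}'_k$, giving $\hat{t}'_k = t'_k + 1 = \tau + 1 = \tau'$.

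The main obstacle will be this last step: arguing that the $2_\tau$-packet $OPT$ plays at $\tau + 1$ is forced to be exactly $CP$'s choice at $\tau$. The 2-bounded constraint restricts the set of candidates, the one-transmission-per-slot rule caps how many can leave the buffer at $\tau$, and the maximality of $\tau$ rules out any leftover; together they pin $p^*$ down.
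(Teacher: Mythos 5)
Your proof is correct, but it is organized quite differently from the paper's. The paper argues by contradiction: it supposes some $t \in [0,\tau']$ lies in no $T'_j$, notes that $OPT$ must transmit a packet at $t$ (by the standing assumption that the input ends once $OPT$'s buffer empties), and splits on whether ${CP}$ also transmits at $t$. If it does, $t$ lies in some $T_i$ and hence in $T'_i$ or $T'_{i-1}$; if it does not, the packet $OPT$ sends at $t$ is forced to be a $2_{t-1}$-packet that ${CP}$ already sent at $t-1 = t'_i$, which makes $\hat{t}'_i = t$ and covers $t$ after all. You instead verify directly that the $T'_j$ tile an interval $[0,\hat{t}'_k] \supseteq [0,\tau']$, and your central observation --- that the condition shifting $\hat{t}'_i$ to $t'_i+1$ is literally the same condition that shifts $\hat{t}_{i+1}$ to $t_{i+1}+1$, so the intervals always abut --- is the mechanism the paper uses only implicitly and locally at the hypothetical gap. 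Your version buys two things: it never needs to inspect which algorithm case ${CP}$ executed (the paper's proof has to name Cases~1.2.3.2, 1.2.3.3, 2.2.2.2 and 3.2.3 to locate the $2_{t-1}$-packet), relying only on the greedy property that ${CP}$ transmits whenever its buffer is nonempty; and it isolates the only genuinely new content in the single boundary time $\tau'=\tau+1$, where your pinning-down of $p^*$ is sound. The paper's version, conversely, handles all times at which ${CP}$ idles uniformly without needing the fact that the $T_i$ partition all of $[0,\tau]$. One degenerate case is worth a sentence in your write-up: when $T_i$ is a singleton (Case~1.1) it can happen that $\hat{t}_i = t_i+1 > \hat{t}'_i = t_i$, so $T'_i$ is empty; contiguity then gives $\hat{t}'_i = \hat{t}'_{i-1}$ and the union $[0,\hat{t}'_j]$ is unaffected, but the induction should acknowledge it.
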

\begin{proof}
	\ifnum \count10 > 0
	%
	%
	%
	%
	%
	$t$
	%
	%
	$t$
	$t$
	$T'$
	$t$
	%
	%
	$t$
	%
	$t$
	%
	%
	$OPT$
	$2_{t-1}$-packet
	$p$
	%
	%
	${CP}$
	${CP}$
	${CP}$
	$T_{i}$
	%
	$T'$
	$t$
	%
	%
	%
	%
	
	%
	\fi
	\ifnum \count11 > 0
	%
	%
	We prove this lemma by contradiction and assume that 
	a time $t \in [0, \tau']$ is not contained in any $T'_{j}$. 
	By the assumption of a given input, 
	$OPT$ transmits a packet at $t$. 
	If ${CP}$ transmits a packet at $t$, 
	then $t$ is contained in some $T_{i}$ and 
	$t$ is contained in either $T'_{i}$ or $T'_{i-1}$ by the definition of $T'$. 
	Thus, 
	${CP}$ does not store any packet at $t$ and 
	does not transmit a packet. 
	That is, 
	no packets arrive at $t$. 
	A packet $p$ transmitted by $OPT$ at $t$ is a $2_{t-1}$-packet. 
	If not, 
	${CP}$ can transmit $p$ at $t$. 
	Hence, 
	${CP}$ transmits $p$ at time $t-1$. 
	By the definition of ${CP}$, 
	${CP}$ executes Case~1.2.3.2, 1.2.3.3, 2.2.2.2 or 3.2.3 and 
	transmits $p$ at $t-1$, 
	which is the last time of $T_{i}$. 
	Therefore, 
	$t$ is contained in $T'_{i}$ by the definition of $T'$, 
	which contradicts the above assumption. 
	\fi
\end{proof}
\ifnum \count10 > 0
%
%
$V_{i}$
$T_{i}$
%
\[
	V_{{CP}}(\sigma) = \sum_{i = 1}^{k} V_{i}
\]
\[
	V_{OPT}(\sigma) \leq \sum_{i = 1}^{k} V'_{i}
\]
%
\[
	\frac{ V_{OPT}(\sigma) }{ V_{{CP}}(\sigma) } 
		 \leq \frac{ \sum_{i = 1}^{k} V'_{i} }{ \sum_{i = 1}^{k} V_{i} }
		 \leq \max_{ i \in [1, k] } \left \{ \frac{ V'_{i} }{ V_{i} } \right \} 
\]
\[
	\frac{ V'_{i} }{ V_{i} }
		 \leq R
\]
%
%
%

%
\fi
\ifnum \count11 > 0
%
%
For any $i (\in [1, k])$, 
let $V_{i}$ ($V'_{i}$) denote the total value of packets transmitted by ${CP}$ ($OPT$) 
during $T_{i}$ ($T'_{i}$). 
By definition, 
\[
	V_{{CP}}(\sigma) = \sum_{i = 1}^{k} V_{i}. 
\]
By Lemma~\ref{LMA:L0}, 
\[
	V_{OPT}(\sigma) \leq \sum_{i = 1}^{k} V'_{i}. 
\]
Since 
\[
	\frac{ V_{OPT}(\sigma) }{ V_{{CP}}(\sigma) } 
		 \leq \frac{ \sum_{i = 1}^{k} V'_{i} }{ \sum_{i = 1}^{k} V_{i} }
		 \leq \max_{ i \in [1, k] } \left \{ \frac{ V'_{i} }{ V_{i} } \right \}, 
\]
we have using Lemma~\ref{LMA:L3}, 
\[
	\frac{ V'_{i} }{ V_{i} }
		 \leq R. 
\]
Therefore, 
we have the following theorem: 
\fi
\begin{THM}\label{thm:upcr}
	\ifnum \count10 > 0
	${CP}$
	\fi
	\ifnum \count11 > 0
	The competitive ratio of ${CP}$ is at most $(1 + \sqrt{17})/4$. 
	\fi
\end{THM}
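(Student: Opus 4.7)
The plan is to exploit the partition of the time axis that has already been set up in the overview. By construction, the interval sequence $T_{1}, \ldots, T_{k}$ covers the entire transmission timeline $[0,\tau]$ of ${CP}$ and is a disjoint partition, so $V_{{CP}}(\sigma) = \sum_{i=1}^{k} V_{i}$ is immediate. Analogously, the intervals $T'_{1}, \ldots, T'_{k}$ are defined so that their union contains every time at which $OPT$ transmits, which is precisely the content of Lemma~\ref{LMA:L0}. Hence $V_{OPT}(\sigma) \leq \sum_{i=1}^{k} V'_{i}$, where the inequality (rather than equality) accounts for the possibility that adjacent $T'_{i}$'s overlap at a boundary time when ${CP}$ transmits a $2_{t-1}$-packet one step before $OPT$ does; such a time would be double-counted on the right-hand side, which only helps the bound.

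Given these two estimates, I would apply the elementary ``mediant'' inequality
\[
\frac{\sum_{i} V'_{i}}{\sum_{i} V_{i}} \leq \max_{i \in [1,k]} \frac{V'_{i}}{V_{i}},
\]
which holds whenever all $V_{i} > 0$. Combining this with the two displayed bounds yields
\[
\frac{V_{OPT}(\sigma)}{V_{{CP}}(\sigma)} \leq \max_{i \in [1,k]} \frac{V'_{i}}{V_{i}}.
\]
The theorem then follows immediately from a per-interval bound of the form $V'_{i}/V_{i} \leq R = (1+\sqrt{17})/4$, which is precisely what Lemma~\ref{LMA:L3} provides and which I would invoke as a black box at this point.

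The main obstacle is, of course, the proof of Lemma~\ref{LMA:L3} itself rather than this wrapper argument. Each type of interval $T_{i}$ corresponds to one branch (or chain of branches) in the six-way case distinction of ${CP}$, so establishing $V'_{i}/V_{i} \leq R$ requires, for every case, (i) identifying exactly which packets $OPT$ can transmit during $T'_{i}$ in the worst case, using the optimality of $P(t,t',t'')$ and the transition relations (1)--(3) to control what $OPT$ can hold in its buffer at the start of the interval, and (ii) verifying the numerical inequality against $R$ using the thresholds $R$ and $\alpha$ that define the branches. The algebraic identity $\alpha = (-3+\sqrt{17})/2$ together with $R = (1+\sqrt{17})/4$ is tuned so that the extremal cases (notably Cases~1.2.3.1/1.2.3.2 on one hand and 1.2.3.4 feeding into 2.x or 3.x on the other) are each exactly tight, which is why the thresholds appear as conditions in those branches.

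Assuming Lemma~\ref{LMA:L3}, the proof reduces to the two summation identities and the mediant inequality above, giving $V_{OPT}(\sigma) \leq R \cdot V_{{CP}}(\sigma)$, and hence the competitive ratio of ${CP}$ is at most $(1+\sqrt{17})/4$, as required.
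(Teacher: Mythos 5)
Your proposal matches the paper's own argument: the paper likewise derives $V_{CP}(\sigma)=\sum_i V_i$ from the partition, uses Lemma~\ref{LMA:L0} to get $V_{OPT}(\sigma)\le\sum_i V'_i$, applies the mediant inequality, and concludes via the per-interval bound $V'_i/V_i\le R$ of Lemma~\ref{LMA:L3}. Your treatment of Lemma~\ref{LMA:L3} as a black box is exactly how the paper structures the theorem's proof, with all the substantive case analysis deferred to that lemma.
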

%

\subsection{Analysis} \label{sec:analysis}


%
\ifnum \count10 > 0
%
%
%
%
$V(t, t', t'')$
$P(t, t', t'')$
$V(t, t', t'') = \sum_{ p \in P(t, t', t'')} v(p)$
%

%
\fi
\ifnum \count11 > 0
%
%
To show Lemma~\ref{LMA:L3}, 
we first prove the following lemmas. 
Let $V(t, t', t'')$ denote the total value of packets in $P(t, t', t'')$. 
That is, 
$V(t, t', t'') = \sum_{ p \in P(t, t', t'')} v(p)$. 
\fi
%

%
\begin{LMA}\label{LMA:L1}
	\ifnum \count10 > 0
	%
	%
	$T_{i} = (t, t')$
	%
	$V'_{i} \leq V(t, t', t')$
	%
	
	%
	\fi
	\ifnum \count11 > 0
	%
	%
	For integer times $t$ and $t' (\geq t)$, 
	suppose that $T_{i} = (t, t')$ and 
	a packet that ${CP}$ transmits at $t'$ is not a $2_{t'}$-packet. 
	Then, 
	$V'_{i} \leq V(t, t', t')$. 
	\fi
\end{LMA}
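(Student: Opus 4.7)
The plan is to leverage the hypothesis, which via the definition of $\hat{t}'$ forces $\hat{t}' = t'$, so $T'_{i} = (\hat{t}, t')$ with $\hat{t} \in \{t, t+1\}$. The central idea is to show that every packet $OPT$ transmits during $[\hat{t}, t']$ is available to the hypothetical offline algorithm $OPT^{*}(t, t', t')$, which starts from the buffer $B(t)$ and sees the arrivals in $[t, t']$. Since $OPT^{*}$ then has at least as many transmission slots as $OPT$ does inside $T'_{i}$, its optimality immediately yields $V(t, t', t') \geq V'_{i}$.

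The easy case is $\hat{t} = t+1$: here $OPT$ transmits during $[t+1, t']$, and by the 2-bounded property every such packet has release time at least $t$, so it lies among the arrivals of $[t, t']$ and is therefore visible to $OPT^{*}$. Hence $OPT^{*}$ can replicate $OPT$'s selection (leaving time $t$ idle if necessary). When $\hat{t} = t$, the same argument handles every packet $OPT$ transmits at a time in $[t+1, t']$; the only delicate packet is the one (if any) that $OPT$ transmits at time $t$ itself with release time $t-1$, i.e., a $2_{t-1}$-packet $p$ with $d(p) = t$, which $OPT^{*}$ can use only if $p$ lies in $B(t)$.

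To close this delicate case I will argue $p \in B(t)$ directly from the definition of $\hat{t}$. The choice $\hat{t} = t$ rules out the scenario in which ${CP}$ transmits some $2_{t-1}$-packet at $t-1$ and $OPT$ transmits that same packet at $t$. Hence either ${CP}$ transmits no $2_{t-1}$-packet at $t-1$, or ${CP}$ transmits some $2_{t-1}$-packet $p' \neq p$ there; in either situation $p$ is never transmitted by ${CP}$ (its only transmission opportunity before its deadline was $t-1$), so $p$ remains in $B(t)$ and $OPT^{*}$ can transmit it at time $t$. I expect this bookkeeping about the leftover $2_{t-1}$-packet to be the main obstacle; the rest of the argument reduces transparently to the optimality of $OPT^{*}$ on $[t, t']$.
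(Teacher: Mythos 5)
Your proposal is correct and follows essentially the same route as the paper: both reduce the claim to the optimality of $OPT^{*}(t,t',t')$ by checking that every packet $OPT$ transmits during $T'_{i}$ is either in $B(t)$ or arrives in $[t,t']$, with the case analysis driven by whether ${CP}$ transmits a $2_{t-1}$-packet at $t-1$ and whether $OPT$ transmits that same packet at $t$. The only cosmetic difference is that for $\hat{t}=t+1$ the paper cites the inclusion $P(t+1,t',t')\subseteq P(t,t',t')$ while you argue feasibility directly by leaving slot $t$ idle; these are interchangeable.
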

\begin{proof}
	\ifnum \count10 > 0
	%
	%
	$T_{i} = (t, t')$
	%
	%
	%
	\begin{equation} \label{eq:LMA.L1.1}
		T'_{i} = (t, t')
	\end{equation}
	%
	%
	$t-1$
	%
	%
	2-bounded
	%
	%
	$OPT^{*}(t, t', t')$
	\[
		V'_{i} \leq V(t, t', t')
	\]
	%
	
	%
	${CP}$
	$2_{t-1}$-packet $p$
	%
	%
	$OPT$
	%
	%
	\begin{equation} \label{eq:LMA.L1.2}
		T'_{i} = (t, t')
	\end{equation}
	%
	${CP}$
	$OPT$
	%
	$OPT$
	$p$
	%
	%
	$OPT^{*}(t, t', t')$
	\[
		V'_{i} \leq V(t, t', t')
	\]
	%
	
	%
	$OPT$
	%
	\[
		T'_{i} = (t+1, t')
	\]
	%
	\[
		V'_{i} \leq V(t+1, t', t') \leq V(t, t', t')
	\]
	%
	%
	
	%
	\fi
	\ifnum \count11 > 0
	%
	%
	Suppose that $T_{i} = (t, t')$. 
	First, 
	we consider ${CP}$ does not transmit a $2_{t-1}$-packet at time $t-1$. 
	Then, by definition, 
	\begin{equation} \label{eq:LMA.L1.1}
		T'_{i} = (t, t'). 
	\end{equation}
	Furthermore, 
	all the $2_{t-1}$-packet, which arrive at $t-1$, are stored in ${CP}$'s buffer at time $t$. 
	Hence, 
	since we discussed the 2-bounded instance in this paper, 
	all the packets in $OPT$'s buffer at $t$ are stored in ${CP}$'s buffer. 
	Thus, 
	by the optimality of $OPT^{*}(t, t', t')$ from Eq.~(\ref{eq:LMA.L1.1}), 
	\[
		V'_{i} \leq V(t, t', t'). 
	\]
	Next, 
	we consider the case in which 
	${CP}$ transmits a $2_{t-1}$-packet $p$ at $t-1$. 
	First, 
	let us consider the case in which 
	$OPT$ does not transmit $p$ at $t$. 
	By definition, 
	\begin{equation} \label{eq:LMA.L1.2}
		T'_{i} = (t, t'). 
	\end{equation}
	At time $t$, 
	$OPT$'s buffer may store $p$ but 
	${CP}$'s buffer does not. 
	However, 
	it does not affect a packet which $OPT$ transmits at $t$ whether $p$ is stored in $OPT$'s buffer 
	because $OPT$ does not transmit $p$ at $t$. 
	Hence, 
	by the optimality of $OPT^{*}(t, t', t')$ from Eq.~(\ref{eq:LMA.L1.2}), 
	\[
		V'_{i} \leq V(t, t', t'). 
	\]
	Second, 
	we discuss the case in which $OPT$ transmits $p$ at $t$. 
	By definition, 
	\[
		T'_{i} = (t+1, t'). 
	\]
	Therefore, 
	\[
		V'_{i} \leq V(t+1, t', t') \leq V(t, t', t'), 
	\]
	in which the second inequality follows from Eq.~(\ref{eq:sec.alg.3}). 
	\fi
\end{proof}
\ifnum \count10 > 0
%
%

%

%
\fi
\ifnum \count11 > 0
%
%

%
\fi
\begin{LMA}\label{LMA:L2}
	\ifnum \count10 > 0
	%
	%
	$T_{i} = (t, t')$
	$2_{t'}$-packet
	%
	$V'_{i} \leq V(t, t', t'+1)$
	%
	
	%
	\fi
	\ifnum \count11 > 0
	%
	%
	For integer times $t$ and $t' (\geq t)$, 
	suppose that $T_{i} = (t, t')$ and 
	a packet that ${CP}$ transmits at $t'$ is a $2_{t'}$-packet. 
	Then, 
	$V'_{i} \leq V(t, t', t'+1)$. 
	\fi
\end{LMA}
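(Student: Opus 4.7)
The plan is to mirror the case analysis of Lemma~\ref{LMA:L1}, adding a second layer of case analysis on $\hat{t}'$. First I would split on the left boundary into the same three cases used for Lemma~\ref{LMA:L1}: (i) ${CP}$ does not transmit a $2_{t-1}$-packet at $t-1$; (ii) ${CP}$ transmits a $2_{t-1}$-packet $p$ at $t-1$ but $OPT$ does not transmit $p$ at $t$; and (iii) ${CP}$ transmits $p$ at $t-1$ and $OPT$ also transmits $p$ at $t$. These give $\hat{t} = t$, $\hat{t} = t$, and $\hat{t} = t+1$ respectively, and in each case the effective buffer of $OPT$ at $\hat{t}$ (that is, its actual buffer minus any packet it still holds but never transmits) is contained in $B(\hat{t})$ by exactly the argument used in Lemma~\ref{LMA:L1}.

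Within each case I would further split on $\hat{t}'$. If $\hat{t}' = t'$, then $T'_i = (\hat{t}, t')$ and the argument of Lemma~\ref{LMA:L1} yields $V'_i \leq V(t, t', t')$ directly; applying Eq.~(\ref{eq:sec.alg.2}) to pass from $V(t, t', t')$ to $V(t, t', t'+1)$ finishes this branch. If $\hat{t}' = t'+1$, then $OPT$ transmits $p'$ at $t'+1$; since $p'$ is a $2_{t'}$-packet with $r(p') = t'$, every packet $OPT$ transmits in $[\hat{t}, t'+1]$ has release time at most $t'$. Combined with the buffer-inclusion above, this shows that $OPT$'s transmissions during $T'_i$ are realizable by an algorithm that starts with $B(\hat{t})$, sees arrivals only in $[\hat{t}, t']$, and is allowed to transmit in $[\hat{t}, t'+1]$, so by optimality of $OPT^{*}$ we get $V'_i \leq V(\hat{t}, t', t'+1)$. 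For case (iii), where $\hat{t} = t+1$, I would then invoke the analog of Eq.~(\ref{eq:sec.alg.3}) with third parameter $t'+1$, namely $P(t+1, t', t'+1) \subseteq P(t, t', t'+1)$; this has the same justification as Eq.~(\ref{eq:sec.alg.3}), since an algorithm starting at $t$ with buffer $B(t)$ can always idle at time $t$ (or, more precisely, mimic whatever ${CP}$ transmits at $t$) and then replicate an algorithm starting at $t+1$.

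The main obstacle I expect is the combined branch $\hat{t} = t+1$ and $\hat{t}' = t'+1$, where $T'_i$ is shifted on both sides relative to $T_i$: one must simultaneously account for the $2_{t-1}$-packet $p$ that $OPT$ transmits at time $t$ (which lies outside $T'_i$ and is therefore not counted in $V'_i$) and the $2_{t'}$-packet $p'$ that $OPT$ postpones to $t'+1$. Once the release-time observation about $p'$ is combined with the two monotonicity inclusions Eq.~(\ref{eq:sec.alg.2}) and the $(t{+}1,t',t'{+}1)$-analog of Eq.~(\ref{eq:sec.alg.3}), every sub-case collapses to the uniform bound $V'_i \leq V(t, t', t'+1)$.
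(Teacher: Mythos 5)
Your proposal is correct and follows essentially the same route as the paper: a case split on whether $OPT$ delays the $2_{t'}$-packet to $t'+1$, the buffer-inclusion argument of Lemma~\ref{LMA:L1} for the left endpoint, and the observation that arrivals at $t'+1$ are irrelevant because the packet $OPT$ transmits at $t'+1$ was released at $t'$. The paper phrases the delayed case as $V'_i \leq V(t,t'+1,t'+1) = V(t,t',t'+1)$ rather than going through $V(\hat{t},t',t'+1)$ and a left-endpoint inclusion, but that is only a reorganization of the same argument.
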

\begin{proof}
	\ifnum \count10 > 0
	%
	%
	$T_{i} = (t, t')$
	$OPT$
	%
	$T'_{i} = (t, t')$%
	$T'_{i} = (t+1, t')$
	\[
		V'_{i} \leq V(t+1, t', t') \leq V(t, t', t')
	\]
	%
	\[
		V'_{i} \leq V(t, t', t'+1)
	\]
	%
	
	%
	$OPT$
	$2_{t'}$-packet $p'$
	%
	%
	${CP}$
	$OPT$
	$T'_{i} = (t+1, t'+1)$
	%
	$T'_{i} = (t, t'+1)$
	%
	%
	\[
		V'_{i} \leq V(t, t'+1, t'+1)
	\]
	%
	$OPT$
	$t'+1$
	%
	\[
			P(t, t'+1, t'+1) = P(t, t', t'+1)
	\]
	\[
		V'_{i} \leq V(t, t', t'+1)
	\]
	%
	
	%
	\fi
	\ifnum \count11 > 0
	%
	%
	Suppose that $T_{i} = (t, t')$. 
	We have in a similar way to the proof of Lemma~\ref{LMA:L1}, either $T'_{i} = (t, t')$ or $T'_{i} = (t+1, t')$. 
	Thus, 
	\[
		V'_{i} \leq V(t+1, t', t') \leq V(t, t', t'). 
	\]
	Hence, 
	by Eq.~(\ref{eq:sec.alg.2}), 
	\[
		V'_{i} \leq V(t, t', t'+1). 
	\]
	Second, 
	we consider the case in which 
	$OPT$ transmits a $2_{t'}$-packet $p'$ at $t'+1$ which 
	${CP}$ transmits at $t'$. 
	If ${CP}$ transmits a $2_{t-1}$-packet $p$ at $t-1$
	and $OPT$ transmits $p$ at $t$, 
	then $T'_{i} = (t+1, t'+1)$. 
	Otherwise, 
	$T'_{i} = (t, t'+1)$. 
	In the either case, 
	we have in a similar way to the proof of Lemma~\ref{LMA:L1}, 
	\[
		V'_{i} \leq V(t, t'+1, t'+1). 
	\]
	Since $OPT$ transmits a packet at time $t'+1$ which arrives at time $t'$, 
	packets arriving at $t'+1$ does not matter to $OPT$. 
	Thus, 
	\[
			P(t, t'+1, t'+1) = P(t, t', t'+1), 
	\]
	which leads to 
	\[
		V'_{i} \leq V(t, t', t'+1). 
	\]
	\fi
\end{proof}
\ifnum \count10 > 0
%
%

%

%
\fi
\ifnum \count11 > 0
%
%

%
\fi
\begin{LMA}\label{LMA:L5}
	\ifnum \count10 > 0
	%
	%
	$OPT$
	\fi
	\ifnum \count11 > 0
	%
	%
	Let $t$ be an integer time. 
	If ${CP}$ executes Case~2.2.2.1 at time $t+1$, 
	then $OPT$ transmits $m_{0}(t)$ and $m_{1}(t)$ at times $t$ and $t+1$, respectively. 
	\fi
\end{LMA}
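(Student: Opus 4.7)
The plan is to extract the release-deadline structure of the relevant packets from the cases that are triggered, and then show that this structure leaves essentially no freedom in how $OPT$ can schedule them.

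First, I would determine the structure of $m_{0}(t)$ and $m_{1}(t)$. The hypothesis that Case~2.2.2.1 runs at $t+1$ implies that $s_{t+1}={}${\tt tmp1} was set at time $t$, which in turn forces that Case~1.2.3.4 was executed at $t$. Using the conditions of Case~1.2 ($d(m_{0}(t))\ne t$) and Case~1.2.3 ($d(m_{1}(t))\ne t+1$), combined with 2-boundedness and the fact that $m_{0}(t)\in P(t,t,t)$ and $m_{1}(t)\in P(t,t+1,t+1)\setminus P(t,t,t)$, I would deduce $r(m_{0}(t))=t$, $d(m_{0}(t))=t+1$, and $r(m_{1}(t))=t+1$, $d(m_{1}(t))=t+2$. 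Analogously, from the condition $d(m'_{2})\ne t+1$ of Case~2.2.2 (which reads $d(m_{2}(t))\ne t+2$ in the outer time of the lemma) together with $m_{2}(t)\in P(t,t+2,t+2)\setminus P(t,t+1,t+1)$, I would infer $r(m_{2}(t))=t+2$ and $d(m_{2}(t))=t+3$.

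Second, I would argue that any optimal offline schedule transmits $m_{0}(t)$ at time $t$ and $m_{1}(t)$ at time $t+1$. Since $r(m_{1}(t))=t+1$, the packet $m_{1}(t)$ cannot be transmitted at $t$, and since $d(m_{0}(t))=t+1$, the packet $m_{0}(t)$ cannot be transmitted later than $t+1$. Hence, if $OPT$ transmits both packets, then the slot assignment is forced. It therefore suffices to show that $OPT$ indeed transmits both. For this I would rule out the two kinds of deviation: $OPT$ skipping $m_{0}(t)$ at time $t$ in favor of a packet with $d=t$, and $OPT$ deferring $m_{1}(t)$ in order to transmit a different packet at $t+1$. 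The former is ruled out because $m_{0}(t)\in P(t,t,t)$ already encodes that $m_{0}(t)$ is the highest-value candidate feasible at $t$ given the buffer $B(t)$ and the arrivals at $t$. The latter is precisely where the side conditions of Case~2.2.2.1 come in: the inequality $q_{2}(t)\ne q_{1}(t)$ together with the ratio condition $\frac{v(m_{0})+v(m_{1})+v(m_{2})}{v(q_{1})+v(m_{0})+v(m_{1})}>R$ guarantees that swapping $m_{1}(t)$ out of slot $t+1$ can only strictly decrease the total value over the whole window.

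The main obstacle I anticipate is bridging from the local $OPT^{*}(t,\cdot,\cdot)$ quantities, which merely encode optimal prefixes starting from $B(t)$, to the global offline optimum $OPT$. I would handle this by a standard exchange argument: starting from any optimal schedule, I would progressively swap its choices in the window $[t,t+2]$ with $m_{0}(t)$ and $m_{1}(t)$ at the forced slots, using Equations~(\ref{eq:sec.alg.1})--(\ref{eq:sec.alg.3}), the release-deadline structure established above, and the value inequalities of Cases~1.2.3.4 and 2.2.2.1 to show that each swap is non-decreasing in total value.
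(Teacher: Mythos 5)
Your proposal correctly extracts the release/deadline structure of $m_{0}(t)$, $m_{1}(t)$ and $m_{2}(t)$, and correctly identifies that the task is to show $OPT$ transmits both packets in the forced slots. But it misses the one idea the paper's proof actually turns on: showing that $r(q_{2}(t)) \geq t+2$. The paper argues by contradiction that if $r(q_{2}(t)) \leq t+1$, then (since $q_{2}(t) \ne q_{1}(t)$ by the condition of Case~2.2.2.1, and $q_{2}(t)$ is neither $m_{0}(t)$ nor $m_{1}(t)$) the packet $q_{2}(t)$ would have been available to $OPT^{*}(t,t+1,t+2)$, and its selection by $OPT^{*}(t,t+2,t+3)$ in place of $q_{1}(t)$ forces $v(q_{2}(t)) > v(q_{1}(t))$, contradicting the optimality of $P(t,t+1,t+2) = \{q_{1}(t), m_{0}(t), m_{1}(t)\}$. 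Once $r(q_{2}(t)) \geq t+2$ and $r(m_{2}(t)) \geq t+2$ are in hand, the four packets $\{m_{0}, m_{1}, m_{2}, q_{2}\}$ that $OPT$ transmits on the window include only two that are releasable by time $t+1$, and the release/deadline structure then pins $m_{0}$ to slot $t$ and $m_{1}$ to slot $t+1$. Without the bound on $r(q_{2}(t))$ your argument fails: if $q_{2}(t)$ were, say, released at $t+1$ with deadline $t+2$, then $OPT$ could transmit $m_{0}$ at $t$, $q_{2}$ at $t+1$, $m_{1}$ at $t+2$ and $m_{2}$ at $t+3$, defeating the lemma.

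Two of your supporting claims are also not sound as stated. First, ``if $OPT$ transmits both packets, then the slot assignment is forced'' does not follow from the deadlines of $m_{0}$ and $m_{1}$ alone ($m_{0}$ at $t+1$ and $m_{1}$ at $t+2$ is consistent with them); the forcing comes only after the other two packets in the window are known to be released at or after $t+2$. Second, ruling out $OPT$ skipping $m_{0}$ at time $t$ ``because $m_{0}\in P(t,t,t)$ encodes the highest-value candidate at $t$'' is a greedy criterion that does not constrain the globally optimal schedule. Finally, the ratio condition of Case~2.2 plays no role in this lemma (the paper's proof uses only $q_{1}(t) \ne q_{2}(t)$ and $d(m_{2}(t)) = t+3$); invoking it to justify a value-exchange argument is the wrong mechanism here.
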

\begin{proof}
	\ifnum \count10 > 0
	%
	%
	%
	Case
	$q_{1}(t) \ne q_{2}(t)$
	%
	%
	$r(q_{2}(t)) \leq t + 1$
	%
	%
	$P$
	$P(t, t+1, t+2) = \{ q_{1}(t), m_{0}(t), m_{1}(t) \}$
	%
	$q_{2}(t)$
	$q_{2}(t)$
	%
	%
	$OPT^{*}(t, t+2, t+3)$
	$P(t, t+1, t+2)$
	%
	%
	$v(q_{1}(t)) < v(q_{2}(t))$
	%
	%
	$V(t, t+1, t+2) < v(q_{2}(t)) + v(m_{0}(t)) + v(m_{1}(t))$
	%
	%
	%
	
	%
	$r(q_{2}(t)) \geq t + 2$
	%
	%
	$P(t, t+2, t+2)$
	$OPT^{*}(t, t+2, t+2)$
	%
	$P(t, t+2, t+3)$
	$P(t, t+2, t+2)$
	$OPT^{*}(t, t+2, t+3)$
	%
	%
	Case~2.2.2
	$d(m_{2}(t)) = t+3$
	$r(m_{2}(t)) \geq t+2$
	%
	%
	$t$
	$t$
	%
	$OPT$
	$m_{0}(t)$
	\fi
	\ifnum \count11 > 0
	%
	%
	Suppose that Case~2.2.2.1 is executed at a time $t+1$. 
	Then, 
	note that $q_{1}(t) \ne q_{2}(t)$ by the condition of Case~2.2.2.1. 
	First, 
	let us consider the case in which $r(q_{2}(t)) \leq t + 1$. 
	By the definition of $P$, 
	$P(t, t+1, t+2) = \{ q_{1}(t), m_{0}(t), m_{1}(t) \}$. 
	However, 
	$q_{2}(t)$ is neither $m_{0}(t)$ nor $m_{1}(t)$ 
	by the definition of $q_{2}(t)$. 
	Also, 
	$OPT^{*}(t, t+2, t+3)$ obtains a higher profit by $q_{2}(t)$ in $P(t, t+2, t+3)$ instead of $q_{1}(t)$ in $P(t, t+1, t+2)$. 
	That is, 
	$v(q_{1}(t)) < v(q_{2}(t))$. 
	Thus, 
	$V(t, t+1, t+2) < v(q_{2}(t)) + v(m_{0}(t)) + v(m_{1}(t))$, 
	which contradicts the optimality of $OPT^{*}(t, t+1, t+2)$. 
	In the following, 
	we consider the case in which 
	$r(q_{2}(t)) \geq t + 2$. 
	$P(t, t+2, t+2)$ contains one packet which $OPT^{*}(t, t+2, t+2)$ transmits at time $t+2$. 
	Also, 
	$P(t, t+2, t+3)$ contains $q_{2}(t)$, 
	which is not in $P(t, t+2, t+2)$.	
	$OPT^{*}(t, t+2, t+3)$ transmits $q_{2}(t)$ at or after $t+2$. 
	Furthermore, 
	$r(m_{2}(t)) \geq t+2$ 
	because $d(m_{2}(t)) = t+3$ by the condition of Case~2.2.2. 
	In a similar way to the proofs of Lemmas~\ref{LMA:L1} and \ref{LMA:L2}, 
	we can show that packets in $OPT$'s buffer at $t$ is included in ones in ${CP}$'s buffer at $t$. 
	Therefore, 
	$OPT$ transmits $q_{2}(t)$ and $m_{2}(t)$ at or after $t+2$ and 
	transmits $m_{0}(t)$ and $m_{1}(t)$ at $t$ and $t+1$. 
	\fi
\end{proof}
\ifnum \count10 > 0
%
%

%

%
\fi
\ifnum \count11 > 0
%
%

%
\fi
\begin{LMA}\label{LMA:L6}
	\ifnum \count10 > 0
	%
	%
	$OPT$
	\fi
	\ifnum \count11 > 0
	%
	%
	Let $t$ be an integer time. 
	If ${CP}$ executes Case~3.2.2 at time $t+2$, 
	then $OPT$ transmits $m_{0}(t)$, $m_{1}(t)$ and $m_{2}(t)$ at times $t$, $t+1$ and $t+2$, respectively. 
	\fi
\end{LMA}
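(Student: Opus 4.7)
The plan is to mirror the structure of Lemma~\ref{LMA:L5}, splitting on the release time of $q_{3}(t)$. First I would unwind which earlier cases must have been executed to reach Case~3.2.2 at time $t+2$: having $s_{t+2}={\tt tmp2}$ forces Case~2.2.2.3 at time $t+1$, which in turn forces Case~1.2.3.4 at time $t$. The Case~2.2.2.3 hypothesis then gives $q_{1}(t)=q_{2}(t)=:q$, while the Case~3.2.2 hypotheses give $q_{3}(t)\neq q$ and $d(m_{3}(t))\neq t+3$. Combined with the definitions of $m_{i}(t),q_{i}(t)$, equations~(\ref{eq:sec.alg.1})--(\ref{eq:sec.alg.3}), and 2-boundedness, this pins down $r(m_{2}(t))=t+2$, $d(m_{2}(t))=t+3$, $r(m_{3}(t))=t+3$, and $d(m_{3}(t))=t+4$.

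The core step is to rule out $r(q_{3}(t))\leq t+2$, which I would do in direct analogy with the first case of Lemma~\ref{LMA:L5}. Under this assumption $q_{3}(t)$ lies in the arrival pool of $P(t,t+2,t+3)$, and by the nesting relations it is distinct from $m_{0}(t),m_{1}(t),m_{2}(t)$ as well as from $q$. Building the swap schedule $\{m_{0}(t),m_{1}(t),m_{2}(t),q_{3}(t)\}$ on $[t,t+3]$, whose feasibility follows from the feasibility of $P(t,t+2,t+2)$ together with $q_{3}(t)$ placed at its latest admissible slot, and comparing its value with $V(t,t+2,t+3)$ gives $v(q)\geq v(q_{3}(t))$. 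Symmetrically, substituting $q$ for $q_{3}(t)$ inside $P(t,t+3,t+4)$ gives a schedule on $[t,t+4]$ that is feasible via $P(t,t+1,t+2)$ on $[t,t+2]$ combined with $m_{2}(t)$ at $t+3$ and $m_{3}(t)$ at $t+4$, so optimality of $P(t,t+3,t+4)$ forces $v(q_{3}(t))\geq v(q)$. The resulting equality $v(q)=v(q_{3}(t))$ then contradicts the fixed tie-breaking rule, since $OPT^{*}(t,t+2,t+3)$ selected $q$ while $OPT^{*}(t,t+3,t+4)$ selected $q_{3}(t)\neq q$.

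Hence $r(q_{3}(t))=t+3$, and therefore $d(q_{3}(t))=t+4$ by 2-boundedness. Together with $r(m_{3}(t))=t+3, d(m_{3}(t))=t+4$, the pair $\{m_{3}(t),q_{3}(t)\}$ must fill the two slots $t+3$ and $t+4$ in any feasible schedule of $P(t,t+3,t+4)=\{m_{0}(t),m_{1}(t),m_{2}(t),m_{3}(t),q_{3}(t)\}$ on $[t,t+4]$, which forces $m_{0}(t),m_{1}(t),m_{2}(t)$ into slots $t,t+1,t+2$ respectively. To transfer this to $OPT$ itself, I would invoke the buffer-inclusion argument from the proofs of Lemmas~\ref{LMA:L1}, \ref{LMA:L2} and \ref{LMA:L5}: the packets in $OPT$'s buffer at time $t$ are contained in $B(t)$, so $OPT$'s optimal choices over $[t,t+4]$ are realized by $P(t,t+3,t+4)$, and $OPT$ transmits $m_{0}(t),m_{1}(t),m_{2}(t)$ at $t,t+1,t+2$ as claimed.

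The main obstacle I expect is the feasibility bookkeeping for the two swap schedules in the contradiction step: with potentially tight deadlines on $m_{0}(t),m_{1}(t),q,q_{3}(t)$, one must carefully verify no Hall-type collision prevents the rearrangement that pushes $m_{2}(t)$ to $t+3$ and $m_{3}(t)$ to $t+4$; a secondary issue is making the tie-breaking contradiction airtight across the two different $OPT^{*}$ instances that produced $P(t,t+2,t+3)$ and $P(t,t+3,t+4)$.
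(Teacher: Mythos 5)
Your proposal is correct and takes essentially the same route as the paper's proof, which likewise reduces this lemma to the argument of Lemma~\ref{LMA:L5}: it rules out $r(q_{3}(t)) \leq t+2$ by the same exchange between $P(t,t+2,t+3)$ and $P(t,t+3,t+4)$ (phrased there as a single optimality contradiction rather than your two inequalities plus tie-breaking), and then uses $r(q_{3}(t)) \geq t+3$ and $r(m_{3}(t)) \geq t+3$ together with the buffer-inclusion argument to force $m_{0}(t), m_{1}(t), m_{2}(t)$ into slots $t, t+1, t+2$. The feasibility and tie-breaking subtleties you flag are glossed over in the paper as well, so your write-up is, if anything, slightly more explicit.
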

\begin{proof}
	\ifnum \count10 > 0
	%
	%
	%
	%
	%
	Case~2.2.2.3
	$q_{1}(t) = q_{2}(t)$
	Case~3.2.2
	$q_{1}(t) \ne q_{3}(t)$
	%
	%
	$r(q_{2}(t)) \leq t + 2$
	%
	%
	$P(t, t+2, t+3)$
	$OPT^{*}(t, t+2, t+3)$
	%
	%
	
	%
	$r(q_{2}(t)) \geq t + 3$
	%
	%
	$P(t, t+3, t+3)$
	$OPT^{*}(t, t+3, t+3)$
	%
	$P(t, t+3, t+4)$
	$P(t, t+3, t+3)$
	$OPT^{*}(t, t+3, t+4)$
	%
	%
	Case~2.2.2
	$d(m_{3}(t)) = t+4$
	$r(m_{3}(t)) \geq t+3$
	%
	$OPT$
	$m_{0}(t)$
	\fi
	\ifnum \count11 > 0
	%
	%
	We can show the proof of this lemma in a similar way to the proof of Lemma~\ref{LMA:L5}.
	Suppose that Case~2.2.2.3 is executed at a time $t+1$ and 
	Case~3.2.2 is executed at $t+2$. 
	Note that $q_{1}(t) = q_{2}(t)$
	by the condition of Case~2.2.2.3 and 
	$q_{1}(t) \ne q_{3}(t)$ by the condition of Case~3.2.2. 
	If $r(q_{2}(t)) \leq t + 2$, 
	then $OPT^{*}(t, t+2, t+3)$ can transmit $q_{2}(t)$ during time $[t, t+2]$ instead of $q_{1}(t)$ concerning $P(t, t+2, t+3)$, 
	which contradicts the optimality of $OPT^{*}(t, t+2, t+3)$. 
	Thus, 
	$r(q_{2}(t)) \geq t + 3$. 
	$P(t, t+3, t+3)$ contains one packet transmitted by $OPT^{*}(t, t+3, t+3)$ at time $t+3$. 
	Also, 
	$P(t, t+3, t+4)$ contains $q_{3}(t)$, 
	which is not in $P(t, t+3, t+3)$, 
	and 
	$OPT^{*}(t, t+3, t+4)$ transmits $q_{3}(t)$ at or after $t+3$. 
	Moreover, 
	since $d(m_{3}(t)) = t+4$ by the condition of Case~2.2.2, 
	$r(m_{3}(t)) \geq t+3$. 
	Therefore, 
	$OPT$ transmits $q_{3}(t)$ and $m_{3}(t)$ at or after $t+3$ and 
	transmits $m_{0}(t)$, $m_{1}(t)$ and $m_{2}$ during time $[t, t+2]$. 
	\fi
\end{proof}
\ifnum \count10 > 0
%
%
%

%
\fi
\ifnum \count11 > 0
%
%
We are ready to prove Lemma~\ref{LMA:L3}. 
\fi
\begin{LMA}\label{LMA:L3}
	\ifnum \count10 > 0
	%
	%
	$V'_{i} / V_{i} \leq R$
	%
	
	%
	\fi
	\ifnum \count11 > 0
	%
	%
	$V'_{i} / V_{i} \leq R$. 
	\fi
\end{LMA}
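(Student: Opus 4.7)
The plan is to proceed by exhaustive case analysis on the sequence of algorithm cases executed during the block $T_i = (t, t')$. From the six-way classification of block lengths in the previous subsection, there are essentially fourteen sub-cases to handle. For each sub-case, I would (a) read off $V_i$ as an explicit sum of packet values (e.g.\ $v(m_0) + v(m_1)$, or $v(q_1) + v(m_1) + v(m_2)$, depending on the branch); (b) bound $V'_i$ by invoking Lemma~\ref{LMA:L1}, Lemma~\ref{LMA:L2}, Lemma~\ref{LMA:L5}, or Lemma~\ref{LMA:L6}---choosing L1 when the packet CP transmits at $t'$ is not a $2_{t'}$-packet, L2 when it is, and the tighter L5 or L6 in the Case~2.2.2.1 and Case~3.2.2 branches respectively, where $T_i$ ends ``one step short''; (c) verify $V'_i / V_i \leq R$ using the inequalities defining the executed cases.

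All the algebra is driven by three identities, immediate from $R = (1+\sqrt{17})/4$ and $\alpha = (-3+\sqrt{17})/2$:
\[
	R(1+\alpha) = 2, \qquad R - 1 = \frac{\alpha}{2}, \qquad 1 + \frac{\alpha}{2} = R.
\]
They encode the tightness of the algorithm's thresholds. For the length-$1$ block (Case~1.1) the ratio is trivially $1$. For the length-$2$ blocks (Cases~1.2.1--1.2.3.3) the calculations are short: in Case~1.2.3.1 the worst ratio $(v(m_0) + v(m_1))/(v(q_1) + v(m_0))$ is maximized at $v(m_0) = v(m_1)$ and $v(q_1) = \alpha v(m_1)$, giving exactly $2/(1+\alpha) = R$; in Case~1.2.3.2 the bound $1 + v(q_1)/(v(m_0) + v(m_1)) < 1 + \alpha/2 = R$ follows from $v(q_1) < \alpha v(m_1) \leq \alpha v(m_0)$; Case~1.2.3.3 is immediate from its hypothesis.

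The main obstacle is the length-$3$ and length-$4$ blocks initiated by Case~1.2.3.4, where CP commits to $q_1$ before knowing whether Case~2 or Case~3 will follow. Each such sub-case requires chaining the Case~1.2.3.4 inequality $v(q_1) > (R-1)(v(m_0) + v(m_1))$ with the respective threshold condition of Case~2 or Case~3. The trickiest are the sub-cases whose target inequality points ``the wrong way'' relative to the case hypothesis, typified by Case~1.2.3.4 followed by Case~2.2.1: here Lemma~\ref{LMA:L1} yields $V'_i \leq v(m_0) + v(m_1) + v(m_2)$ and the Case~2.2 hypothesis forces $v(m_2)$ to be large, but closing the inequality $v(m_0) - R v(q_1) \leq (R-1)(v(m_1) + v(m_2))$ still needs a delicate combination of $v(m_0) < v(m_1)$ with the identities above---in particular $R - 1 = \alpha/2$. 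The remaining challenge is the sub-cases~1.2.3.4+2.2.2.1 and~1.2.3.4+2.2.2.3+3.2.2, where Lemmas~\ref{LMA:L5} and~\ref{LMA:L6} are essential: they pin $OPT$'s transmissions during $T'_i$ down to exactly $m_0(t), m_1(t)$ (respectively $m_0(t), m_1(t), m_2(t)$), giving strictly tighter bounds on $V'_i$ than the generic L1/L2 bounds can, which is what lets the ratio close to $R$.
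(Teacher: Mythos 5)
Your proposal follows essentially the same route as the paper's proof: the same fourteen-way case split, the same assignment of Lemmas~\ref{LMA:L1}, \ref{LMA:L2}, \ref{LMA:L5} and \ref{LMA:L6} to the respective branches (with L5/L6 reserved exactly for Cases~2.2.2.1 and 3.2.2), and the same chaining of the Case~1.2.3.4/2.2/3.2 threshold inequalities, with your identities $R(1+\alpha)=2$ and $R-1=\alpha/2$ encoding the tight cases just as the paper's Eqs.~(\ref{eq:L3.1})--(\ref{eq:L3.6}) do. The sub-cases you work out explicitly (1.1, 1.2.3.1--1.2.3.3) match the paper's computations, and your description of the remaining ones is an accurate blueprint of what the paper does.
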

\begin{proof}
	\ifnum \count10 > 0
	%
	%
	%
	%
	%
	$t$
	$t+1$
	$t+1$
	$t+2$
	%
	%
	$m_{i}(t)$
	$m_{i}$
	%
	
	%
	%
	Case~1.2.3.4
	\begin{equation} \label{eq:L3.1}
		v(m_{0}) < v(m_{1})
	\end{equation}
	\begin{equation} \label{eq:L3.2}
		v(q_{1}) > (R-1)(v(m_{0}) + v(m_{1}))
	\end{equation}
	%
	%
	\begin{equation} \label{eq:L3.3}
		v(m_{2}) > (R-1)(v(m_{0}) + v(m_{1})) + R v(q_{1})
	\end{equation}
	%
	%
	\begin{equation} \label{eq:L3.5}
		v(m_{3}) > (R-1)(v(m_{0}) + v(m_{1}) + v(m_{2})) + R v(q_{1})
	\end{equation}
	%
	%
	\begin{eqnarray*} 
		v(m_{0}) 
			&>& (R - 1)(v(q_{2}) + v(m_{1}) + v(m_{2})) \\
			&=& (R - 1)(v(q_{1}) + v(m_{1}) + v(m_{2})) 
				\mbox{\hspace*{10mm} (Case~2.2.2.3
			&>& (R - 1)(v(q_{1}) + v(m_{1})) + (R-1)( (R-1)(v(m_{0}) + v(m_{1})) + R v(q_{1}) ) 
				\mbox{\hspace*{10mm} (
			&=& (R^2 - 1) v(q_{1}) + (R^2 - R) v(m_{1}) + (R^2 - 2R + 1) v(m_{0}) \\
			&>& (R^2 - 1)(R - 1)(v(m_{0} + v(m_{1})) + (R^2 - R) v(m_{1}) + (R^2 - 2R + 1) v(m_{0}) 
				\mbox{\hspace*{5mm} (
			&=& (R^3 - 3R + 2) v(m_{0}) + (R^3 - 2R + 1) v(m_{1})
	\end{eqnarray*}
	\begin{equation} \label{eq:L3.6}
		v(m_{0}) > \frac{ R^3 - 2R + 1 }{ -R^3 + 3 R - 1 } v(m_{1})
	\end{equation}
	%
	
	%
	%
	Case~1.1
	${CP}$
	$V_{j} = v(m_{0}) = V(t, t, t)$
	%
	%
	$V'_{j} \leq V(t, t, t)$
	%
	%
	$V'_{j} / V_{j} \leq 1$
	%
	
	%
	Case~1.2.1
	${CP}$
	$V_{j} = v(m_{0}) + v(m_{1}) = V(t, t+1, t+1)$
	%
	%
	$V'_{j} \leq V(t, t+1, t+1)$
	%
	%
	$V'_{j} / V_{j} \leq 1$
	%
	
	%
	Case~1.2.3.1
	Case~1.2.3.1
	${CP}$
	\[	
		V_{j} = v(q_{1}) + v(m_{0}) 
	\]
	%
	$P$
	$P(t, t+1, t+1) = \{ m_{0}, m_{1} \}$
	\[
		V'_{j} \leq V(t, t+1, t+1) = v(m_{0}) + v(m_{1})
	\]
	%
	%
	\[
		\frac{ V'_{j} }{ V_{j} } \leq \frac{ v(m_{0}) + v(m_{1}) }{ v(q_{1}) + v(m_{0}) }
			\leq \frac{ v(m_{1}) + v(m_{1}) }{ \alpha v(m_{1}) + v(m_{1}) }
			= \frac{ 2 }{ \alpha + 1 } 
			= R
	\]
	2
	Case~1.2.3.1
	$v(m_{0}) \geq v(m_{1})$
	$\alpha$
	%
	
	%
	Case~1.2.3.2
	Case~1.2.3.2
	${CP}$
	\[	
		V_{j} = v(m_{0}) + v(m_{1})
	\]
	%
	%
	$P$
	$P(t, t+1, t+1) = \{ m_{0}, m_{1} \}$
	$P(t, t+1, t+2) = \{ q_{1}, m_{0}, m_{1} \}$
	%
	%
	$OPT$
	\[
		V'_{j} \leq V(t, t+1, t+1) = v(m_{0}) + v(m_{1})
	\]
	%
	$V'_{j} / V_{j} \leq 1$
	%
	%
	$OPT$
	\[
		V'_{j} \leq V(t, t+1, t+2) = v(q_{1}) + v(m_{0}) + v(m_{1})
	\]
	%
	%
	\[
		\frac{ V'_{j} }{ V_{j} } \leq \frac{ v(q_{1}) + v(m_{0}) + v(m_{1}) }{ v(m_{0}) + v(m_{1}) }
			< \frac{ \alpha v(m_{0}) + v(m_{0}) + v(m_{0}) }{ v(m_{0}) + v(m_{0}) }
			= \frac{ \alpha + 2 }{ 2 } 
			= R
	\]
	2
	Case~1.2.3.2
	$v(m_{0}) \geq v(m_{1})$
	$\alpha$
	%
	
	%
	Case~1.2.3.3
	Case~1.2.3.3
	${CP}$
	\[	
		V_{j} = v(m_{0}) + v(m_{1})
	\]
	%
	%
	Case~1.2.3.2
	$OPT$
	\[
		V'_{j} \leq V(t, t+1, t+1) = v(m_{0}) + v(m_{1})
	\]
	%
	$V'_{j} / V_{j} \leq 1$
	%
	%
	$OPT$
	\[
		V'_{j} \leq V(t, t+1, t+2) = v(q_{1}) + v(m_{0}) + v(m_{1})
	\]
	%
	%
	\[
		\frac{ V'_{j} }{ V_{j} } \leq \frac{ v(q_{1}) + v(m_{0}) + v(m_{1}) }{ v(m_{0}) + v(m_{1}) }
			\leq R
	\]
	%
	
	%
	%
	%
	%
	Case~2.1
	${CP}$
	$t+1$
	\[
		V_{j} = v(q_{1}) + v(m_{0}) + v(m_{1})
	\]
	%
	$P(t, t+2, t+2) =  \{ m_{0}, m_{1}, m_{2} \}$
	\[
		V'_{j} \leq V(t, t+2, t+2) = v(m_{0}) + v(m_{1}) + v(m_{2})
	\]
	%
	%
	Case~2.1
	\[
		\frac{ V'_{j} }{ V_{j} } 
			\leq \frac{ v(m_{0}) + v(m_{1}) + v(m_{2}) }{ v(q_{1}) + v(m_{0}) + v(m_{1}) }
			\leq R
	\]
	%
	
	%
	Case~2.2.1
	Case~2.2.1
	${CP}$
	\[
		V_{j} = v(q_{1}) + v(m_{1}) + v(m_{2})
	\]
	%
	%
	$P(t, t+2, t+2) =  \{ m_{0}, m_{1}, m_{2} \}$
	\[
		V'_{j} \leq V(t, t+1, t+1) = v(m_{0}) + v(m_{1}) + v(m_{2})
	\]
	%
	%
	\begin{eqnarray*}
		\frac{ V'_{j} }{ V_{j} } 
			&\leq& \frac{ v(m_{0}) + v(m_{1}) + v(m_{2}) }{ v(q_{1}) + v(m_{1}) + v(m_{2}) }\\
			&<&  \frac{ v(m_{0}) + v(m_{1}) + (R-1)(v(m_{0}) + v(m_{1})) + R v(q_{1}) }{ v(q_{1}) + v(m_{1}) + (R-1)(v(m_{0}) + v(m_{1})) + R v(q_{1}) } 
				\mbox{\hspace*{10mm} (
			&=&  \frac{ R (v(m_{0}) + v(m_{1})) + R v(q_{1}) }{ R(v(q_{1}) + v(m_{1})) - v(m_{0}) + (R+1) v(q_{1}) }\\
			&<& \frac{ R (v(m_{0}) + v(m_{1})) + R(R-1)(v(m_{0}) + v(m_{1})) }{ R(v(q_{1}) + v(m_{1})) - v(m_{0}) + (R+1)(R-1)(v(m_{0}) + v(m_{1})) } \mbox{\hspace*{10mm} (
			&=& \frac{ R^2 (v(m_{0}) + v(m_{1})) }{ (R^2 + R - 1)(v(q_{1}) + v(m_{1})) - v(m_{0}) } 
			 <  \frac{ 2 R^2 v(m_{0}) }{ 2(R^2 + R - 1) v(m_{0}) - v(m_{0}) } \mbox{\hspace*{10mm} (
			&=& \frac{ 2 R^2 }{ 2 R^2 + 2 R - 3 } < \frac{ R^2 }{ R } = R
	\end{eqnarray*}
	%
	
	%
	Case~2.2.2.1
	Case~2.2.2.1
	${CP}$
	$m_{1}$
	\[
		V_{j} = v(q_{1}) + v(m_{1})
	\]
	%
	%
	\[
		V'_{j} = v(m_{0}) + v(m_{1})
	\]
	%
	%
	\begin{eqnarray*}
		\frac{ V'_{j} }{ V_{j} } 
			&\leq& \frac{ v(m_{0}) + v(m_{1}) }{ v(q_{1}) + v(m_{1}) } \\
			&<& \frac{ v(m_{0}) + v(m_{1}) }{ (R-1)(v(m_{0}) + v(m_{1})) + v(m_{1}) } \mbox{\hspace*{10mm} (
			&<& \frac{ v(m_{0}) + v(m_{0}) }{ (R-1)(v(m_{0}) + v(m_{0})) + v(m_{0}) } \mbox{\hspace*{10mm} (
			&=& \frac{ 2 }{ R } = \sqrt{5} - 1 < R
	\end{eqnarray*}
	%
	
	%
	Case~2.2.2.2
	Case~2.2.2.2
	${CP}$
	\[	
		V_{j} = v(q_{1}) + v(m_{1}) + v(m_{2})
	\]
	%
	%
	$P$
	$P(t, t+2, t+2) = \{ m_{0}, m_{1}, m_{2} \}$
	Case~2.2.2.2
	$q_{2} = q_{1}$
	$P(t, t+2, t+3) = \{ q_{2}, m_{0}, m_{1}, m_{2} \} = \{ q_{1}, m_{0}, m_{1}, m_{2} \}$
	%
	%
	$OPT$
	\[
		V'_{j} \leq V(t, t+2, t+2) = v(m_{0}) + v(m_{1}) + v(m_{2})
	\]
	%
	\[
		\frac{ V'_{j} }{ V_{j} } 
			\leq \frac{ v(m_{0}) + v(m_{1}) + v(m_{2})}{ v(q_{1}) + v(m_{1}) + v(m_{2}) }
			 <   \frac{ v(q_{1}) + v(m_{0}) + v(m_{1}) + v(m_{2})}{ v(q_{1}) + v(m_{1}) + v(m_{2}) }
	\]
	%
	%
	$OPT$
	\[
		V'_{j} \leq V(t, t+2, t+3) = v(q_{1}) + v(m_{0}) + v(m_{1}) + v(m_{2})
	\]
	%
	%
	\[
		\frac{ V'_{j} }{ V_{j} } \leq \frac{ v(q_{1}) + v(m_{0}) + v(m_{1}) + v(m_{2}) }{ v(q_{1}) + v(m_{1}) + v(m_{2}) }
			\leq R
	\]
	%
	
	%
	%
	%
	Case~3.1
	Case~3.1
	${CP}$
	\[	
		V_{j} = v(q_{1}) + v(m_{0}) + v(m_{1}) + v(m_{2})
	\]
	%
	%
	$P$
	$P(t, t+4, t+4) = \{ m_{0}, m_{1}, m_{2}, m_{3} \}$
	%
	%
	\[
		V'_{j} \leq V(t, t+4, t+4) = v(m_{0}) + v(m_{1}) + v(m_{2}) + v(m_{3})
	\]
	%
	%
	Case~3.1
	\[
		\frac{ V'_{j} }{ V_{j} } 
			\leq \frac{ v(m_{0}) + v(m_{1}) + v(m_{2}) + v(m_{3}) }{ v(q_{1}) + v(m_{0}) + v(m_{1}) + v(m_{2}) }
			\leq R
	\]
	%
	
	%
	Case~3.2.1
	Case~3.2.2
	Case~3.2.2
	${CP}$
	$m_{2}$
	\[
		V_{j} = v(q_{1}) + v(m_{0}) + v(m_{2})
	\]
	%
	%
	\[
		V'_{j} = v(m_{0}) + v(m_{1}) + v(m_{2})
	\]
	%
	%
	\begin{eqnarray*}
		\frac{ V'_{j} }{ V_{j} } 
			&\leq& 
				\frac{ v(m_{0}) + v(m_{1}) + v(m_{2}) }{ v(q_{1}) + v(m_{1}) + v(m_{2}) }\\
			&<& 
				\frac{ v(m_{0}) + v(m_{1}) + (R-1)(v(m_{0}) + v(m_{1})) + R v(q_{1}) }
					{ v(q_{1}) + v(m_{1}) + (R-1)(v(m_{0}) + v(m_{1})) + R v(q_{1}) } 
						\mbox{\hspace*{10mm} (
			&=& 
				\frac{ R v(m_{0}) + R v(m_{1}) + R v(q_{1}) }
					{ R v(m_{0}) + (R-1)v(m_{1}) + (R+1) v(q_{1}) }\\
			&<& 
				\frac{ R v(m_{0}) + R v(m_{1})) + R(R-1)(v(m_{0}) + v(m_{1})) }
					{ R v(m_{0}) + (R-1)v(m_{1}) + (R+1)(R-1)(v(m_{0}) + v(m_{1})) }
						\mbox{\hspace*{10mm} (
			&=& 
				\frac{ R^2 v(m_{0}) + R^2 v(m_{1}) }
					{ (R^2 + R - 1) v(m_{0})) + (R^2 + R - 2) v(m_{1}) } \\
			&<& \frac{ R^3 }{ 2 R^3 + R^2 - 4 R + 1 } 
			 		\mbox{\hspace*{10mm} (
			&<& 1.23 
			< R
	\end{eqnarray*}
	%
	
	%
	Case~3.2.3
	Case~3.2.3
	${CP}$
	\[
		V_{j} = v(q_{1}) + v(m_{0}) + v(m_{2}) + v(m_{3})
	\]
	%
	%
	$P$
	$P(t, t+3, t+3) = \{ m_{0}, m_{1}, m_{2}, m_{3} \}$
	Case~3.2.3
	$q_{3} = q_{1}$
	$P(t, t+3, t+4) = \{ q_{3}, m_{0}, m_{1}, m_{2}, m_{3} \} = \{ q_{1}, m_{0}, m_{1}, m_{2}, m_{3} \}$
	%
	%
	$OPT$
	\[
		V'_{j} \leq V(t, t+3, t+3) = v(m_{0}) + v(m_{1}) + v(m_{2}) + v(m_{3})
	\]
	%
	%
	\begin{eqnarray*}
		\frac{ V'_{j} }{ V_{j} } 
			\leq \frac{ v(m_{0}) + v(m_{1}) + v(m_{2}) + v(m_{3}) }{ v(q_{1}) + v(m_{0}) + v(m_{2}) + v(m_{3}) }
			 <  \frac{ v(p) + v(m_{0}) + v(m_{1}) + v(m_{2}) + v(m_{3}) }{ v(q_{1}) + v(m_{0}) + v(m_{2}) + v(m_{3}) }
	\end{eqnarray*}
	%
	%
	$OPT$
	\[
		V'_{j} \leq V(t, t+3, t+4) = v(q_{1}) + v(m_{0}) + v(m_{1}) + v(m_{2}) + v(m_{3})
	\]
	%
	%
	\begin{eqnarray} \label{eq:L3.7}
		v(q_{1}) &+& v(m_{0}) + v(m_{1}) + v(m_{2}) + v(m_{3}) \nonumber \\
			&>& R (v(m_{0}) + v(m_{1}) + v(m_{2})) + (R+1) v(q_{1})
				\mbox{\hspace*{10mm} (
			&>& R^2 (v(m_{0}) + v(m_{1})) + (R^2+R+1) v(q_{1})
				\mbox{\hspace*{10mm} (
			&>& R^2 (v(m_{0}) + v(m_{1})) + (R^2+R+1)(R-1) (v(m_{0}) + v(m_{1}))
				\mbox{\hspace*{10mm} (
			&=& (R^3 + R^2 - 1) (v(m_{0}) + v(m_{1}))
	\end{eqnarray}
	\begin{eqnarray*}
		\frac{ V'_{j} }{ V_{j} } 
			&\leq& \frac{ v(q_{1}) + v(m_{0}) + v(m_{1}) + v(m_{2}) + v(m_{3}) }
				{ v(q_{1}) + v(m_{0}) + v(m_{2}) + v(m_{3}) } \\
			&<& \frac{ (R^3 + R^2 - 1) (v(m_{0}) + v(m_{1})) }
				{ (R^3 + R^2 - 1) (v(m_{0}) + v(m_{1})) - v(m_{1}) } 
					\mbox{\hspace*{10mm} (
			&<& \frac{ R^4 + R^3 - R }{ R^4 + 2 R^3 - 4 R + 1 } 
			 		\mbox{\hspace*{10mm} (
			&<& 1.23 
			< R
	\end{eqnarray*}
	%
	
	%
	Case~3.2.1
	Case~3.2.1
	${CP}$
	\[
		V_{j} = v(q_{1}) + v(m_{0}) + v(m_{2}) + v(m_{3})
	\]
	%
	$P(t, t+4, t+4) =  \{ m_{0}, m_{1}, m_{2}, m_{3} \}$
	\[
		V'_{j} \leq V(t, t+4, t+4) = v(m_{0}) + v(m_{1}) + v(m_{2}) + v(m_{3})
	\]
	%
	%
	\begin{eqnarray*}
		\frac{ V'_{j} }{ V_{j} } 
			&\leq& 
				\frac{ v(m_{0}) + v(m_{1}) + v(m_{2}) + v(m_{3}) }{ v(q_{1}) + v(m_{0}) + v(m_{2}) + v(m_{3}) } \\
			&<& \frac{ v(q_{1}) + v(m_{0}) + v(m_{1}) + v(m_{2}) + v(m_{3}) }
				{ v(q_{1}) + v(m_{0}) + v(m_{2}) + v(m_{3}) } 
			< R
	\end{eqnarray*}
	%
	
	%
	\fi
	\ifnum \count11 > 0
	%
	%
	Suppose that ${CP}$ executes Case~1 at a time $t$ and 
	$t$ is contained in $T_{j}$. 
	Note that 
	if ${CP}$ executes Case~1.2.3.4 at $t$, 
	then ${CP}$ executes Case~2 at time $t+1$. 
	Moreover, 
	if ${CP}$ executes Case~2.2.2.3 at $t+1$, 
	then ${CP}$ executes Case~3 at time $t+2$. 
	For ease of presentation, 
	$m_{i}$ and $q_{i}$ denote $m_{i}(t)$ and $q_{i}(t)$, respectively. 
	Before proceeding to the proof, 
	we give some inequalities used often later. 
	When ${CP}$ executes Case~1.2.3.4 at $t$, 
	by the condition of Case~1.2.3.4, 
	\begin{equation} \label{eq:L3.1}
		v(m_{0}) < v(m_{1}) 
	\end{equation}
	and 
	\begin{equation} \label{eq:L3.2}
		v(q_{1}) > (R-1)(v(m_{0}) + v(m_{1})). 
	\end{equation}
	When ${CP}$ executes Case~2.2 at $t+1$, 
	\begin{equation} \label{eq:L3.3}
		v(m_{2}) > (R-1)(v(m_{0}) + v(m_{1})) + R v(q_{1}). 
	\end{equation}
	When ${CP}$ executes Case~3.2 at $t+2$, 
	\begin{equation} \label{eq:L3.5}
		v(m_{3}) > (R-1)(v(m_{0}) + v(m_{1}) + v(m_{2})) + R v(q_{1}). 
	\end{equation}
	When ${CP}$ executes Case~2.2.2.3 at $t+1$, 
	\begin{eqnarray*} 
		&& v(m_{0}) 
			> (R - 1)(v(q_{2}) + v(m_{1}) + v(m_{2})) \\
			&=& (R - 1)(v(q_{1}) + v(m_{1}) + v(m_{2})) 
				\mbox{\hspace*{10mm} (by the condition of Case~2.2.2.3)}\\
			&>& (R - 1)(v(q_{1}) + v(m_{1})) + (R-1)( (R-1)(v(m_{0}) + v(m_{1})) + R v(q_{1}) ) 
				\mbox{\hspace*{5mm} (by Eq.~(\ref{eq:L3.3}))}\\
			&=& (R^2 - 1) v(q_{1}) + (R^2 - R) v(m_{1}) + (R^2 - 2R + 1) v(m_{0}) \\
			&>& (R^2 - 1)(R - 1)(v(m_{0} + v(m_{1})) + (R^2 - R) v(m_{1}) + (R^2 - 2R + 1) v(m_{0}) 
				\mbox{\hspace*{3mm} (by Eq.~(\ref{eq:L3.2}))}\\
			&=& (R^3 - 3R + 2) v(m_{0}) + (R^3 - 2R + 1) v(m_{1}). 
	\end{eqnarray*}
	By rearranging this inequality, 
	we have 
	\begin{equation} \label{eq:L3.6}
		v(m_{0}) > \frac{ R^3 - 2R + 1 }{ -R^3 + 3 R - 1 } v(m_{1}). 
	\end{equation}
	%
	
	%
	Now we discuss the profit ratio for the execution of each case. 
	When ${CP}$ executes Case~1.1, 
	${CP}$ transmits $m_{0}$ at $t$ and 
	$V_{j} = v(m_{0}) = V(t, t, t)$. 
	On the other hand, 
	$V'_{j} \leq V(t, t, t)$ by Lemma~\ref{LMA:L1}. 
	Thus, 
	$V'_{j} / V_{j} \leq 1$. 
	%
	
	%
	When Case~1.2.1 or 1.2.2 is executed, 
	${CP}$ transmits both $m_{0}$ and $m_{1}$ and 
	$V_{j} = v(m_{0}) + v(m_{1}) = V(t, t+1, t+1)$. 
	By Lemma~\ref{LMA:L1}, 
	$V'_{j} \leq V(t, t+1, t+1)$. 
	Thus, 
	$V'_{j} / V_{j} \leq 1$. 
	%
	
	%
	We consider Case~1.2.3.1. 
	${CP}$ transmits $q_{1}$ and $m_{0}$ at times $t$ and $t+1$, respectively, by definition. 
	Thus, 
	\[	
		V_{j} = v(q_{1}) + v(m_{0}). 
	\]
	Since $P(t, t+1, t+1) = \{ m_{0}, m_{1} \}$ by the definition of $P$, 
	by Lemma~\ref{LMA:L1}, 
	\[
		V'_{j} \leq V(t, t+1, t+1) = v(m_{0}) + v(m_{1}). 
	\]
	Hence, 
	\[
		\frac{ V'_{j} }{ V_{j} } \leq \frac{ v(m_{0}) + v(m_{1}) }{ v(q_{1}) + v(m_{0}) }
			\leq \frac{ v(m_{1}) + v(m_{1}) }{ \alpha v(m_{1}) + v(m_{1}) }
			= \frac{ 2 }{ \alpha + 1 } 
			= R, 
	\]
	in which 
	the second inequality follows from 
	$v(q_{1}) \geq \alpha v(m_{1})$ 
	and 
	$v(m_{0}) \geq v(m_{1})$, 
	which is the condition of Case~1.2.3.1, 
	and 
	the last equality follows from the definitions of $\alpha$ and $R$. 
	%
	
	%
	Let us consider Case~1.2.3.2. 
	Since ${CP}$ transmits $m_{0}$ and $m_{1}$ at $t$ and $t+1$, respectively, by definition, 
	\[
		V_{j} = v(m_{0}) + v(m_{1}). 
	\]
	$P(t, t+1, t+1) = \{ m_{0}, m_{1} \}$
	and 
	$P(t, t+1, t+2) = \{ q_{1}, m_{0}, m_{1} \}$
	by definition. 
	If $OPT$ does not transmit $m_{1}$ at $t+2$, 
	we have using Lemma~\ref{LMA:L1}, 
	\[
		V'_{j} \leq V(t, t+1, t+1) = v(m_{0}) + v(m_{1}). 
	\]
	Thus, 
	$V'_{j} / V_{j} \leq 1$. 
	If $OPT$ transmits $m_{1}$ at $t+2$, 
	we have using Lemma~\ref{LMA:L2}, 
	\[
		V'_{j} \leq V(t, t+1, t+2) = v(q_{1}) + v(m_{0}) + v(m_{1}). 
	\]
	By these inequalities, 
	\[
		\frac{ V'_{j} }{ V_{j} } \leq \frac{ v(q_{1}) + v(m_{0}) + v(m_{1}) }{ v(m_{0}) + v(m_{1}) }
			< \frac{ \alpha v(m_{0}) + v(m_{0}) + v(m_{0}) }{ v(m_{0}) + v(m_{0}) }
			= \frac{ \alpha + 2 }{ 2 } 
			= R, 
	\]
	in which 
	the second inequality follows from $v(q_{1}) < \alpha v(m_{1})$ 
	and 
	$v(m_{0}) \geq v(m_{1})$, 
	which is the execution condition of Case~1.2.3.2, 
	and 
	the last equality is immediately from the definitions of $\alpha$ and $R$. 
	%
	
	%
	We consider Case~1.2.3.3. 
	Since ${CP}$ transmits $m_{0}$ and $m_{1}$ at $t$ and $t+1$, respectively, 
	\[	
		V_{j} = v(m_{0}) + v(m_{1}). 
	\]
	In a similar way to the proof of Case~1.2.3.2, 
	if $OPT$ does not transmit $m_{1}$ at $t+2$, 
	it follows from Lemma~\ref{LMA:L1} that 
	\[
		V'_{j} \leq V(t, t+1, t+1) = v(m_{0}) + v(m_{1}). 
	\]
	Thus, 
	$V'_{j} / V_{j} \leq 1$. 
	If $OPT$ transmits $m_{1}$ at time $t+2$, 
	we have by Lemma~\ref{LMA:L2}, 
	\[
		V'_{j} \leq V(t, t+1, t+2) = v(q_{1}) + v(m_{0}) + v(m_{1}). 
	\]
	By the condition of Case~1.2.3.3, 
	\[
		\frac{ V'_{j} }{ V_{j} } \leq \frac{ v(q_{1}) + v(m_{0}) + v(m_{1}) }{ v(m_{0}) + v(m_{1}) }
			\leq R. 
	\]
	%
	
	%
	For the rest of the proof, 
	suppose that ${CP}$ executes Case~1.2.3.4 at $t$ and next 
	executes Case~2 at $t+1$. 
	Hence, 
	${CP}$ transmits $q_{1}$ at $t$. 
	First, 
	we consider the case in which Case~2.1 is executed at $t+1$. 
	By the definition of Case~2.1, 
	${CP}$ transmits $m_{0}$ and $m_{1}$ at $t+1$ and $t+2$, respectively, 
	and thus,  
	\[
		V_{j} = v(q_{1}) + v(m_{0}) + v(m_{1}). 
	\]
	On the other hand, 
	since $P(t, t+2, t+2) =  \{ m_{0}, m_{1}, m_{2} \}$ by definition, 
	it follows from Lemma~\ref{LMA:L1} that 
	\[
		V'_{j} \leq V(t, t+2, t+2) = v(m_{0}) + v(m_{1}) + v(m_{2}). 
	\]
	Thus, 
	\[
		\frac{ V'_{j} }{ V_{j} } 
			\leq \frac{ v(m_{0}) + v(m_{1}) + v(m_{2}) }{ v(q_{1}) + v(m_{0}) + v(m_{1}) }
			\leq R, 
	\]
	which follows from the condition of Case~2.1. 
	%
	
	%
	We discuss Case~2.2.1. 
	Since ${CP}$ transmits $m_{1}$ and $m_{2}$ at $t+1$ and $t+2$, respectively, by definition, 
	\[
		V_{j} = v(q_{1}) + v(m_{1}) + v(m_{2}). 
	\]
	Since $P(t, t+2, t+2) =  \{ m_{0}, m_{1}, m_{2} \}$, 
	we have using Lemma~\ref{LMA:L1}, 
	\[
		V'_{j} \leq V(t, t+1, t+1) = v(m_{0}) + v(m_{1}) + v(m_{2}). 
	\]
	Hence, 
	\begin{eqnarray*}
		\frac{ V'_{j} }{ V_{j} } 
			&\leq& \frac{ v(m_{0}) + v(m_{1}) + v(m_{2}) }{ v(q_{1}) + v(m_{1}) + v(m_{2}) }\\
			&<&  \frac{ v(m_{0}) + v(m_{1}) + (R-1)(v(m_{0}) + v(m_{1})) + R v(q_{1}) }{ v(q_{1}) + v(m_{1}) + (R-1)(v(m_{0}) + v(m_{1})) + R v(q_{1}) } \mbox{\hspace*{10mm} (by Eq.~(\ref{eq:L3.3}))}\\
			&=&  \frac{ R (v(m_{0}) + v(m_{1})) + R v(q_{1}) }{ R(v(q_{1}) + v(m_{1})) - v(m_{0}) + (R+1) v(q_{1}) }\\
			&<& \frac{ R (v(m_{0}) + v(m_{1})) + R(R-1)(v(m_{0}) + v(m_{1})) }{ R(v(q_{1}) + v(m_{1})) - v(m_{0}) + (R+1)(R-1)(v(m_{0}) + v(m_{1})) } \mbox{\hspace*{10mm} (by Eq.~(\ref{eq:L3.2}))}\\
			&=& \frac{ R^2 (v(m_{0}) + v(m_{1})) }{ (R^2 + R - 1)(v(q_{1}) + v(m_{1})) - v(m_{0}) } 
			 <  \frac{ 2 R^2 v(m_{0}) }{ 2(R^2 + R - 1) v(m_{0}) - v(m_{0}) } \mbox{\hspace*{5mm} (by Eq.~(\ref{eq:L3.1}))}\\
			&=& \frac{ 2 R^2 }{ 2 R^2 + 2 R - 3 } < \frac{ R^2 }{ R } = R. 
	\end{eqnarray*}
	%
	
	%
	In Case~2.2.2.1, 
	${CP}$ transmits $m_{1}$ at $t+1$ and hence, 
	\[
		V_{j} = v(q_{1}) + v(m_{1}). 
	\]
	By Lemma~\ref{LMA:L5}, 
	\[
		V'_{j} = v(m_{0}) + v(m_{1}). 
	\]
	Thus, 
	\begin{eqnarray*}
		\frac{ V'_{j} }{ V_{j} } 
			&\leq& \frac{ v(m_{0}) + v(m_{1}) }{ v(q_{1}) + v(m_{1}) } \\
			&<& \frac{ v(m_{0}) + v(m_{1}) }{ (R-1)(v(m_{0}) + v(m_{1})) + v(m_{1}) } \mbox{\hspace*{10mm} (by Eq.~(\ref{eq:L3.2}))}\\
			&<& \frac{ v(m_{0}) + v(m_{0}) }{ (R-1)(v(m_{0}) + v(m_{0})) + v(m_{0}) } \mbox{\hspace*{10mm} (by Eq.~(\ref{eq:L3.1}))}\\
			&=& \frac{ 2 }{ R } = \sqrt{5} - 1 < R. 
	\end{eqnarray*}
	%
	
	%
	In Case~2.2.2.2, 
	${CP}$ transmits $m_{1}$ and $m_{2}$ at $t+1$ and $t+2$, respectively, 
	\[	
		V_{j} = v(q_{1}) + v(m_{1}) + v(m_{2}). 
	\]
	By the definition of $P$, 
	$P(t, t+2, t+2) = \{ m_{0}, m_{1}, m_{2} \}$. 
	Since 
	$q_{2} = q_{1}$ 
	by the condition of Case~2.2.2.2,
	$P(t, t+2, t+3) = \{ q_{2}, m_{0}, m_{1}, m_{2} \} = \{ q_{1}, m_{0}, m_{1}, m_{2} \}$. 
	If $OPT$ does not transmit $m_{2}$ at $t+3$, 
	by Lemma~\ref{LMA:L1}, 
	\[
		V'_{j} \leq V(t, t+2, t+2) = v(m_{0}) + v(m_{1}) + v(m_{2}). 
	\]
	Thus, 
	\[
		\frac{ V'_{j} }{ V_{j} } 
			\leq \frac{ v(m_{0}) + v(m_{1}) + v(m_{2})}{ v(q_{1}) + v(m_{1}) + v(m_{2}) }
			 <   \frac{ v(q_{1}) + v(m_{0}) + v(m_{1}) + v(m_{2})}{ v(q_{1}) + v(m_{1}) + v(m_{2}) }. 
	\]
	If $OPT$ transmits $m_{2}$ at $t+3$, 
	by Lemma~\ref{LMA:L2}, 
	\[
		V'_{j} \leq V(t, t+2, t+3) = v(q_{1}) + v(m_{0}) + v(m_{1}) + v(m_{2}). 
	\]
	Therefore, 
	\[
		\frac{ V'_{j} }{ V_{j} } \leq \frac{ v(q_{1}) + v(m_{0}) + v(m_{1}) + v(m_{2}) }{ v(q_{1}) + v(m_{1}) + v(m_{2}) }
			\leq R, 
	\]
	which is immediately from the condition of Case~2.2.2.2. 
	%
	
	%
	In the following, 
	suppose that ${CP}$ executes Cases~2.2.2.3 and 3 at $t+1$ and $t+2$, respectively, 
	which indicates that 
	${CP}$ transmits $m_{0}$ at $t+1$. 
	%
	%
	Let us consider the case in which Case~3.1 is executed at $t+2$. 
	Since ${CP}$ transmits $m_{1}$ and $m_{2}$ at $t+2$ and $t+3$, respectively, 
	\[
		V_{j} = v(q_{1}) + v(m_{0}) + v(m_{1}) + v(m_{2}). 
	\]
	$P(t, t+4, t+4) = \{ m_{0}, m_{1}, m_{2}, m_{3} \}$ and thus 
	we have using Lemma~\ref{LMA:L1}, 
	\[
		V'_{j} \leq V(t, t+4, t+4) = v(m_{0}) + v(m_{1}) + v(m_{2}) + v(m_{3}). 
	\]
	Thus, 
	\[
		\frac{ V'_{j} }{ V_{j} } 
			\leq \frac{ v(m_{0}) + v(m_{1}) + v(m_{2}) + v(m_{3}) }{ v(q_{1}) + v(m_{0}) + v(m_{1}) + v(m_{2}) }
			\leq R, 
	\]
	which is immediately from the condition of Case~3.1. 
	%
	
	%
	We discuss Case~3.2.1 at the end of this proof 
	and next consider Case~3.2.2. 
	Since ${CP}$ transmits $m_{2}$ at $t+2$, 
	\[
		V_{j} = v(q_{1}) + v(m_{0}) + v(m_{2}). 
	\]
	Moreover, 
	by Lemma~\ref{LMA:L6}, 
	\[
		V'_{j} = v(m_{0}) + v(m_{1}) + v(m_{2}). 
	\]
	Hence, 
	\begin{eqnarray*}
		\frac{ V'_{j} }{ V_{j} } 
			&\leq& 
				\frac{ v(m_{0}) + v(m_{1}) + v(m_{2}) }{ v(q_{1}) + v(m_{1}) + v(m_{2}) }\\
			&<& 
				\frac{ v(m_{0}) + v(m_{1}) + (R-1)(v(m_{0}) + v(m_{1})) + R v(q_{1}) }
					{ v(q_{1}) + v(m_{1}) + (R-1)(v(m_{0}) + v(m_{1})) + R v(q_{1}) } 
						\mbox{\hspace*{10mm} (by Eq.~(\ref{eq:L3.3}))}\\
			&=& 
				\frac{ R v(m_{0}) + R v(m_{1}) + R v(q_{1}) }
					{ R v(m_{0}) + (R-1)v(m_{1}) + (R+1) v(q_{1}) }\\
			&<& 
				\frac{ R v(m_{0}) + R v(m_{1})) + R(R-1)(v(m_{0}) + v(m_{1})) }
					{ R v(m_{0}) + (R-1)v(m_{1}) + (R+1)(R-1)(v(m_{0}) + v(m_{1})) }
						\mbox{\hspace*{10mm} (by Eq.~(\ref{eq:L3.2}))}\\
			&=& 
				\frac{ R^2 v(m_{0}) + R^2 v(m_{1}) }
					{ (R^2 + R - 1) v(m_{0})) + (R^2 + R - 2) v(m_{1}) } \\
			&<& \frac{ R^3 }{ 2 R^3 + R^2 - 4 R + 1 } 
			 		\mbox{\hspace*{10mm} (by Eq.~(\ref{eq:L3.6}))}\\
			&<& 1.23 
			< R. 
	\end{eqnarray*}
	%
	
	%
	In Case~3.2.3, 
	${CP}$ transmits $m_{2}$ and $m_{3}$ at $t+2$ and $t+3$, respectively, 
	and thus, 
	\[
		V_{j} = v(q_{1}) + v(m_{0}) + v(m_{2}) + v(m_{3}). 
	\]
	On the other hand, 
	$P(t, t+3, t+3) = \{ m_{0}, m_{1}, m_{2}, m_{3} \}$. 
	Since $q_{3} = q_{1}$ by the condition of Case~3.2.3, 
	$P(t, t+3, t+4) = \{ q_{3}, m_{0}, m_{1}, m_{2}, m_{3} \} = \{ q_{1}, m_{0}, m_{1}, m_{2}, m_{3} \}$. 
	If $OPT$ does not transmit $m_{3}$ at $t+4$, 
	by Lemma~\ref{LMA:L1}, 
	\[
		V'_{j} \leq V(t, t+3, t+3) = v(m_{0}) + v(m_{1}) + v(m_{2}) + v(m_{3}). 
	\]
	Hence, 
	\begin{eqnarray*}
		\frac{ V'_{j} }{ V_{j} } 
			\leq \frac{ v(m_{0}) + v(m_{1}) + v(m_{2}) + v(m_{3}) }{ v(q_{1}) + v(m_{0}) + v(m_{2}) + v(m_{3}) }
			 <  \frac{ v(p) + v(m_{0}) + v(m_{1}) + v(m_{2}) + v(m_{3}) }{ v(q_{1}) + v(m_{0}) + v(m_{2}) + v(m_{3}) }. 
	\end{eqnarray*}
	If $OPT$ transmits $m_{3}$ at $t+4$, 
	by Lemma~\ref{LMA:L2}, 
	\[
		V'_{j} \leq V(t, t+3, t+4) = v(q_{1}) + v(m_{0}) + v(m_{1}) + v(m_{2}) + v(m_{3}). 
	\]
	Then, 
	\begin{eqnarray} \label{eq:L3.7}
		v(q_{1}) &+& v(m_{0}) + v(m_{1}) + v(m_{2}) + v(m_{3}) \nonumber \\
			&>& R (v(m_{0}) + v(m_{1}) + v(m_{2})) + (R+1) v(q_{1})
				\mbox{\hspace*{10mm} (by Eq.~(\ref{eq:L3.5}))} \nonumber \\
			&>& R^2 (v(m_{0}) + v(m_{1})) + (R^2+R+1) v(q_{1})
				\mbox{\hspace*{10mm} (by Eq.~(\ref{eq:L3.3}))} \nonumber \\
			&>& R^2 (v(m_{0}) + v(m_{1})) + (R^2+R+1)(R-1) (v(m_{0}) + v(m_{1}))
				\mbox{\hspace*{10mm} (by Eq.~(\ref{eq:L3.2}))} \nonumber \\
			&=& (R^3 + R^2 - 1) (v(m_{0}) + v(m_{1}))
	\end{eqnarray}
	Hence, we have 
	\begin{eqnarray*}
		\frac{ V'_{j} }{ V_{j} } 
			&\leq& \frac{ v(q_{1}) + v(m_{0}) + v(m_{1}) + v(m_{2}) + v(m_{3}) }
				{ v(q_{1}) + v(m_{0}) + v(m_{2}) + v(m_{3}) } \\
			&<& \frac{ (R^3 + R^2 - 1) (v(m_{0}) + v(m_{1})) }
				{ (R^3 + R^2 - 1) (v(m_{0}) + v(m_{1})) - v(m_{1}) } 
					\mbox{\hspace*{10mm} (by Eq.~(\ref{eq:L3.7}))}\\
			&<& \frac{ R^4 + R^3 - R }{ R^4 + 2 R^3 - 4 R + 1 } 
			 		\mbox{\hspace*{10mm} (by Eq.~(\ref{eq:L3.6}))}\\
			&<& 1.23 
			< R. 
	\end{eqnarray*}
	%
	
	%
	Finally we discuss Case~3.2.1. 
	Since ${CP}$ transmits $m_{2}$ and $m_{3}$ at $t+2$ and $t+3$, respectively, 
	\[
		V_{j} = v(q_{1}) + v(m_{0}) + v(m_{2}) + v(m_{3}). 
	\]
	Since $P(t, t+4, t+4) =  \{ m_{0}, m_{1}, m_{2}, m_{3} \}$, 
	\[
		V'_{j} \leq V(t, t+4, t+4) = v(m_{0}) + v(m_{1}) + v(m_{2}) + v(m_{3}) 
	\]
	by Lemma~\ref{LMA:L1}. 
	In the same way as the proof of Case~3.2.2, 
	\begin{eqnarray*}
		\frac{ V'_{j} }{ V_{j} } 
			&\leq& 
				\frac{ v(m_{0}) + v(m_{1}) + v(m_{2}) + v(m_{3}) }{ v(q_{1}) + v(m_{0}) + v(m_{2}) + v(m_{3}) } \\
			&<& \frac{ v(q_{1}) + v(m_{0}) + v(m_{1}) + v(m_{2}) + v(m_{3}) }
				{ v(q_{1}) + v(m_{0}) + v(m_{2}) + v(m_{3}) } 
			< R. 
	\end{eqnarray*}
	\fi
\end{proof}
\ifnum \count10 > 0
%
%

%

%
\fi
\ifnum \count11 > 0
%
%

%
\fi
%


%
%
\ifnum \count10 > 0
%

%

%
\fi
\ifnum \count11 > 0
%
%
%
\fi
%



%
\ifnum \count10 > 0
%
%

%

%
\fi
\ifnum \count11 > 0
%
%

%
\fi

\end{document}